\pdfoutput=1
\documentclass[preprint,12pt]{elsarticle}

\usepackage{amsmath,amsfonts}
\usepackage{algorithmic}
\usepackage{algorithm}
\usepackage{array}
\usepackage[caption=false,font=normalsize,labelfont=sf,textfont=sf]{subfig}
\usepackage{textcomp}
\usepackage{stfloats}
\usepackage{url}
\usepackage{verbatim}
\usepackage{graphicx}

\usepackage{amssymb}
\usepackage{amsthm}

\usepackage{float}

\newtheorem{definition}{Definition}
\newtheorem{remark}{Remark}
\newtheorem{lemma}{Lemma}

\newtheorem{assumption}{Assumption}[section]
\newtheorem{theorem}{Theorem}

\usepackage{chngcntr}

\journal{Information Sciences}

\usepackage{geometry}   

\begin{document}

\begin{frontmatter}



\title{Non-trivial consensus on directed signed matrix-weighted networks with compound measurement noises and time-varying topologies}


\author{Tianmu Niu\fnref{label1}} 
\author{Xiaoqun Wu\corref{cor1}\fnref{label2}} 

            
\affiliation[label1]{organization={School of Mathematics and Statistics, Wuhan University},
	             city={Wuhan},
	             postcode={430072},
                 country={China}}
                 
\affiliation[label2]{organization={College of Computer Science and Software Engineering, Shenzhen University},
				city={Shenzhen},
				postcode={518060},
				country={China}}
				
\cortext[cor1]{Corresponding author.\\
E-mail addresses: tmniu.math@whu.edu.cn (T. Niu), xqwu@whu.edu.cn (X. Wu).}

\begin{abstract}
This paper studies non-trivial consensus\textemdash a relatively novel and unexplored convergence behavior\textemdash on directed signed matrix-weighted networks subject to both additive and multiplicative measurement noises under time-varying topologies. Building upon grounded matrix-weighted Laplacian properties, a stochastic dynamic model is established that simultaneously captures inter-dimensional cooperative and antagonistic interactions, compound measurement noises and time‑varying network structures. Based on stochastic differential equations theory, protocols that guarantee mean square and almost sure non-trivial consensus are proposed. Specifically, for any predetermined non-trivial consensus state, all agents are proven to converge toward this non-zero value in the mean-square and almost-sure senses. The design of control gain function in our protocols highlights a balanced consideration of the cumulative effect over time, the asymptotic decay property and the finite energy corresponding to measurement noises. Notably, the conditions on time-varying topologies in our protocols only require boundedness of elements in edge weight matrices, which facilitate the practicality of concept ``time-varying topology" in matrix-weighted network consensus algorithms. Furthermore, the proposed protocols operate under milder connectivity conditions and no requirements on structural (un)balance properties. The work in this paper demonstrates that groups with both cooperative and antagonistic inter-dimensional interactions can achieve consensus even in the presence of compound measurement noises and time-varying topologies, challenging the conventional belief that consensus is attainable only in fully cooperative settings.
\end{abstract}


\begin{highlights}
\item Achieves non-trivial consensus on matrix-weighted networks subject to compound noises, cooperative-antagonistic interactions and time-varying topologies.

\item Guarantees convergence of agent states to any preset non-zero consensus value in both the mean-square and almost-sure senses.

\item Imposes conditions on time-varying topologies that require only boundedness of elements in edge weight matrices.

\item Designs control gains that balance the cumulative effect, asymptotic decay, and finite energy of measurement noises.

\item Operates under relaxed connectivity conditions with no structural balance or unbalance requirements.

\end{highlights}

\begin{keyword}
Non-trivial consensus \sep Signed matrix-weighted networks \sep Time-varying topologies \sep Additive noises \sep Multiplicative noises \sep Mean square consensus \sep Almost sure consensus



\end{keyword}

\end{frontmatter}



\section{Introduction}\label{Introduction}
Consensus, being a central topic in networked multi-agent systems (MASs), aims at enabling a group of agents to achieve a shared state through local interactions, typically on unsigned scalar-weighted graphs with fixed or switching topologies and negligible disturbances \cite{consensus problems} \cite{consensus_IS_2023}.

When communication measurements are corrupted by noises and disturbances, consensus dynamics become inherently stochastic. Based on stochastic differential equations theory \cite{SDE_Mao}, scholars have performed researches in depth on dynamic behaviors of MASs with noises. Works in \cite{noises_SIAM2018} investigated continuous-time multi-agent consensus with coexisting additive and multiplicative measurement noises, providing conditions for mean square and almost sure consensus under fixed and time-varying topologies. In particular, additive noise is typically treated as external disturbance independent of agent states, while multiplicative noise driven by relative state measurements introduces state-dependent uncertainty. Considering time-delays, \cite{noises_delay_Auto2019} developed consensus conditions for continuous-time MASs with additive measurement noises and for those with multiplicative measurement noises, respectively. For discrete-time Markov chains consensus protocols with additive noise, \cite{noises_Markov_TAC2019} studied their performance. Further, dynamic mean-square consensus for second-order hybrid MASs with time-varying delays \cite{MS_IF2023}, fault-tolerant consensus control under channel noises \cite{fault_noises_IF2023}, and event-triggered consensus for linear MASs with multiplicative noises \cite{noises_etc_Auto2026} are studied.

Beyond noises, antagonism naturally and commonly exists in reality. Therefore, signed networks—where edges may represent cooperative or antagonistic interactions—emerge and lead to qualitatively different collective behaviors. Classical dynamics on signed networks include (interval) bipartite or trivial consensus \cite{antagonistic} \cite{opinion separation} \cite{IB consensus} \cite{bipartite_IF2024}, where agents states might partition into two groups with opposite signs of the same magnitude. Altafini’s seminal work \cite{antagonistic} formalized such bipartite consensus dynamics and characterized structural balance conditions under which this behavior emerges. Mean square bipartite consensus problem for signed networks with communication noises is studied in \cite{noises_signed_2019}. On the other hand, by adding noises to the cooperative–competitive interactive information, \cite{noises_private_TAC2024} realized the differentially private bipartite consensus in mean-square and almost-sure senses over signed networks. 

However, the above consensus formulations predominantly consider scalar weights and structurally balanced topologies (for signed networks), with limitations to capture multi-dimensional coupling among agent states, for example, coupled oscillators and multiple-link pendulums dynamics \cite{small oscillations} \cite{LC}, the logical interdependence between different topics in opinion dynamics \cite{network science}	\cite{multiple interdependent topics} \cite{Luan}, and the graph effective resistances in distributed control and estimation \cite{Graph effective resistance} \cite{Barooah Estimation}. Matrix-weighted networks extend scalar-weighted interactions to reveal richer phenomena in consensus and clustering. In the study of consensus algorithms over matrix-weighted networks, a key geometric condition known as the ``positive spanning tree" was established for achieving consensus in undirected and unsigned networks by \cite{positive spanning tree}. For signed matrix-weighted networks, classical graph properties (e.g., structural balance) do not directly guarantee bipartite consensus, and new structures such as non-trivial balancing sets are required to characterize such steady states \cite{balancing set}. Subsequent researches on both directed and undirected signed networks have developed algebraic and geometric criterias for achieving bipartite and trivial consensus \cite{leader-following matrix-weighted consensus} \cite{Su TCASII matrix weighted bipartite consensus} \cite{Su TAC directed matrix weighted consensus}. The algebraic conditions are concentrated on the null space and spectral properties of matrix-weighted Laplacian \cite{Su TCASII matrix weighted bipartite consensus} \cite{Su TAC directed matrix weighted consensus} \cite{Pan TCASII bipartite consensus}. While prior researches \cite{Privacy Preservation_noises_matrix-weight_TCNS2025}	\cite{noises_mw_TCNS2020} have examined privacy preservation and consensus performance on matrix-weighted networks with noises, a gap remains in considering systems with antagonistic interactions and compound noises.

Non-trivial consensus, introduced in \cite{non-trivial consensus}, describes a relatively novel convergence behavior in signed networks, where agents reach agreement in both value and sign despite the coexistence of cooperative and antagonistic interactions. This behavior was previously deemed exclusive to fully cooperative groups. The non-trivial consensus paradigm holds notable theoretical and practical relevance, with potential applications ranging from formation control in UAV networks \cite{UAS} to opinion manipulation strategies within social networks \cite{Consensus manipulation} \cite{Manipulating opinions} \cite{manipulation Automatica}. In the latter context, the pervasive coexistence of trust and mistrust often leads to opinion separation among countries, political parties and opinionated individuals \cite{opinion separation}. Studying non-trivial consensus can thus inform strategies to steer a network containing mistrust relationships toward a shared opinion, thereby preventing final polarization or indifference on a given topic. Existing studies on non-trivial consensus, however, remain sparse. \cite{non-trivial consensus} established non-trivial consensus feasibility, albeit only for (essentially) cooperative networks, and \cite{Valcher} derived necessary and sufficient conditions for undirected signed networks—requiring both connectivity and structural balance without providing an explicit control design. \cite{SCIS Non-trivial consensus} realized non-trivial consensus on directed signed networks under both structurally balanced and unbalanced conditions. Nevertheless, all aforementioned works are confined to scalar-weighted network frameworks. One extension to matrix-weighted networks has been made in \cite{my_matrix-weighted_NTC}, which offered the conditions to gurantee that every eigenvalue of the grounded matrix-weighted Laplacians has positive real part, and further developed the systematic approach to achieve non-trivial consensus on directed signed matrix-weighted networks with both fixed and switching topologies. However, this approach operates under the ideal conditions without measurement noises, and requires a lower bound for the dwell time $\Delta t_{k}$. 

In view of the preceding discussion, it is still an open issue to investigate non-trivial consensus on signed matrix-weighted networks with compound noises and directed time-varying topologies. The primary contributions of this work, along with comparisons to some existing related achievements are summarized as follows.

\begin{enumerate}
\item Firstly, the model of directed signed matrix-weighted networks subject to compound noises is established. The inter-dimensional antagonistic interactions, the additive noises that are independent of agents states, the multiplicative noises driven by relative state measurements are jointly characterized in this model. Our model appropriately extends the classic SDEs model in \cite{noises_SIAM2018} to the more general matrix-weighted framework, which possesses greater generality and broader applicability. Building upon grounded matrix-weighted Laplacian properties in \cite{my_matrix-weighted_NTC}, the non-trivial consensus tasks are further converted into the stability problems of non-trivial consensus error.
	
\item Secondly, mean square non-trivial consensus and almost sure non-trivial consensus on signed matrix-weighted networks with directed fixed topologies and compound noises are realized. Specifically, Given a target non-trivial consensus state $\boldsymbol{\theta}\neq\boldsymbol{0}_{d}$, the convergence of agent states towards this shared non-zero state $\boldsymbol{\theta}$ in both the mean-square and the almost-sure senses, respectively, is guaranteed by the synthesis of informed agents $\mathcal{V}_{\mathcal{I}}(t)$ selection, coupling coefficients $\delta_{i}(t)$, coupling matrix weights $B_{i}(t)$ and the external signals $x_{0}(t)$ determination and the control gain $c(t)$ design. The design of $c(t)$ highlights a balanced consideration of the cumulative effect over time, the asymptotic decay property and the finite energy corresponding to measurement noises. Compared with most existing achievements on consensus algorithms of MASs with noises \cite{consensus_IS_2023} \cite{noises_SIAM2018} \cite{fault_noises_IF2023} \cite{noises_etc_Auto2026} \cite{noises_signed_2019} \cite{noises_mw_TCNS2020}, we realized non-trivial consensus, a relatively novel convergence result on  matrix-weighted networks that incorporate compound noises, antagonistic interactions and directed topologies. Meanwhile, unlike existing works on consensus algorithms of matrix-weighted networks that often demand positive spanning tree \cite{positive spanning tree} \cite{Su TAC directed matrix weighted consensus} or positive-negative spanning tree \cite{balancing set} \cite{Pan TCASII bipartite consensus} conditions, our protocols operate under milder connectivity conditions and no requirements on structural (un)balance properties.

\item Finally, mean square non-trivial consensus and almost sure non-trivial consensus on signed matrix-weighted networks with directed time-varying topologies and compound noises are further accomplished. Compared with existing works on consensus algorithms of matrix-weighted networks with time-varying topologies \cite{my_matrix-weighted_NTC} \cite{switching matrix-weighted consensus} \cite{switching matrix-weighted cluster consensus}, our topology Assumption serves as a relatively milder condition, since we do not theoretically assume the existence of dwell times, instead, only boundedness of elements in edge weight matrices is required. This condition allows the network topologies to evolve continuously over time, facilitates the practicality of concept ``time-varying topology" in matrix weighted network consensus algorithms.
\end{enumerate}

The remainder of this paper is organized as follows. Section \ref{Preliminaries} introduces notations and basic descriptions of general signed matrix-weighted networks with time-varying topologies and control gain function. Section \ref{Problem formulation} presents the specific network models with compound measurement noises, the standard definitions of mean square non-trivial consensus and almost sure non-trivial consensus in our study, along with a basic Lemma on the properties of grounded matrix-weighted Laplacian. Section \ref{Main results} presents the main results: mean square and almost sure non-trivial consensus are established for signed matrix-weighted networks with fixed topologies, and the same two convergence behaviors are further guaranteed for the case of time-varying topologies. Section \ref{Numerical examples} carries out the numerical simulations to verify our theoretical results. Finally, Section \ref{Conclusions} concludes this paper.

\section{Preliminaries}\label{Preliminaries}
\subsection{Notations}
Let $\mathbb{R},\ \mathbb{C},\ \mathbb{N},\ \mathbb{N}^{*}$ denote the set of real numbers, complex numbers, natural numbers and positive integers, respectively. For any $M\in\mathbb{N}^{*}$, $\underline{M}\triangleq\{1,...,M\}$. $\boldsymbol{0}_{d\times d}\in\mathbb{R}^{d\times d},\ \boldsymbol{0}_{d}\in\mathbb{R}^{d}$ represent the matrix and vector whose elements are all zero. The symmetric matrix $Q=Q^{\top}\in\mathbb{R}^{d\times d}$ is positive (semi-)definite, denoted as $Q\succ\boldsymbol{0}\ (Q\succeq\boldsymbol{0})$, if for any $\boldsymbol{\alpha}\in\mathbb{R}^{d},\ \boldsymbol{\alpha}\neq\boldsymbol{0}_{d}$, there holds  $\boldsymbol{\alpha}^{\top}Q\boldsymbol{\alpha}>0\ (\boldsymbol{\alpha}^{\top}Q\boldsymbol{\alpha}\geq0)$. While it is negative (semi-) definite, denoted as $Q\prec\boldsymbol{0}\ (Q\preceq\boldsymbol{0})$, if for any $\boldsymbol{\alpha}\in\mathbb{R}^{d},\ \boldsymbol{\alpha}\neq\boldsymbol{0}_{d}$, there holds  $\boldsymbol{\alpha}^{\top}Q\boldsymbol{\alpha}<0\ (\boldsymbol{\alpha}^{\top}Q\boldsymbol{\alpha}\leq0)$. More generally, for any two symmetric matrices $P,\ Q\in\mathbb{R}^{d\times d}$, denote $P\succ Q (P\succeq Q)$ if $P-Q\succ\boldsymbol{0} (P-Q\succeq\boldsymbol{0})$, and $P\prec Q (P\preceq Q)$ if $P-Q\prec\boldsymbol{0} (P-Q\preceq\boldsymbol{0})$.

The matrix-valued sign function $\mathbf{sgn}(\cdot)$ is employed to express the positive/negative (semi-) definiteness of a real symmetric matrix $Q$, to be specific,

\begin{equation}\notag
\mathbf{sgn}(Q)
\begin{cases}
1,\ Q\succeq\boldsymbol{0},\ Q\neq\boldsymbol{0}_{d\times d},\\
0,\qquad\qquad Q=\boldsymbol{0}_{d\times d},\\
-1,\ Q\preceq\boldsymbol{0},\ Q\neq\boldsymbol{0}_{d\times d},
\end{cases}
\end{equation}
and the function $|\cdot|$ for such symmetric matrix $Q$ is defined as $|Q|=\mathbf{sgn}(Q)\cdot Q$. The null space of a matrix $Q\in\mathbb{R}^{d\times d}$ is $\mathbf{null}(Q)=\{\boldsymbol{\alpha}\in\mathbb{R}^{d}|Q\boldsymbol{\alpha}=\boldsymbol{0}_{d}\}$. A block matrix $\mathcal{A}$ can be denoted as $\mathcal{A}=[A_{ij}]\in\mathbb{R}^{Nd\times Nd}$, with submatrix $A_{ij}\in\mathbb{R}^{d\times d},\ i,j\in\underline{N}$ being its $(i,j)$th block. Symbol $\otimes$ represents the Kronecker product.

\subsection{Graph Theory}
A signed matrix-weighted communication graph with $N$ vertices and time-varying topology is represented by a triple $\mathcal{G}(t)=(\mathcal{V},\mathcal{E}(t),\mathcal{A}(t))$, in which $\mathcal{V}=\{v_{1},...,v_{N}\},\ \mathcal{E}(t)\subseteq\mathcal{V}\times\mathcal{V}$ are the vertices set and the edges set, respectively. For directed $\mathcal{G}(t)$, $(v_{j},v_{i})\in\mathcal{E}(t)$ if and only if there exists a directed edge from vertex $v_{j}$ to $v_{i}$ at time $t$, and when $\mathcal{G}(t)$ is undirected, $(v_{j},v_{i})\in\mathcal{E}(t)$ and $(v_{i},v_{j})\in\mathcal{E}(t)$ holds simultaneously. $\mathcal{A}(t)=[A_{ij}(t)]\in\mathbb{R}^{Nd\times Nd}$ is the block adjacency matrix, in which the symmetric submatrix $A_{ij}(t)\in\mathbb{R}^{d\times d}$ represents the connectivity between vertices $v_{i}$ and $v_{j}$. If $\mathbf{sgn}(A_{ij}(t))=0$, then there exists no edge from $v_{j}$ to $v_{i}$ in $\mathcal{G}(t)$, otherwise, $\mathbf{sgn}(A_{ij}(t))=1$ if and only if $(v_{j},v_{i})\in\mathcal{E}(t)$ is a positive (semi-) definite edge, and $\mathbf{sgn}(A_{ij}(t))=-1$ if and only if $(v_{j},v_{i})\in\mathcal{E}(t)$ is a negative (semi-) definite edge.
When $\mathcal{G}(t)$ is undirected, $A_{ji}(t)=A_{ij}(t)$ for any $i,j\in\underline{N},\ i\neq j$. For any vertex $v_{i}\in\mathcal{V}$, its in-neighbor vertices set at time $t$ is defined as $\mathcal{N}_{i}(t)=\{v_{j}\in\mathcal{V}|\mathbf{sgn}(A_{ij}(t))\neq0\}$, and the out-neighbor vertices set is $\mathcal{N}'_{i}(t)=\{v_{j}\in\mathcal{V}|\mathbf{sgn}(A_{ji}(t))\neq0\}$. Define the vertices subset $\mathcal{U}(t)\subseteq\mathcal{V}$ as $\mathcal{U}(t)=\{v_{i}\in\mathcal{V}|\exists j\in\underline{N}\ s.t.\ \mathbf{sgn}(A_{ij}(t))=-1\}$. For any $v_{i}\in\mathcal{V}$, define $\varOmega_{i}(t)=\{v_{j}\in\mathcal{V}|\mathbf{sgn}(A_{ij}(t))=-1\}$, and $\varGamma_{i}(t)=\{v_{j}\in\mathcal{V}|\mathbf{sgn}(A_{ij}(t))=1\}$.

\subsection{Model Description}
Generally, consider the directed signed matrix-weighted networks with time-varying topology $\mathcal{G}(t)=(\mathcal{V},\mathcal{E}(t),\mathcal{A}(t))$:
\begin{equation}\label{FAN}
	\dot{x}_{i}(t)=c(t)\sum_{j\in\mathcal{N}_{i}(t)}\left|A_{i j}(t)\right|\left[\mathbf{sgn}\left(A_{ij}(t)\right) x_{j}(t)-x_{i}(t)\right],\ i\in\underline{N},
\end{equation}
where $x_{i}(t)\in\mathbb{R}^{d}$ is the state of agent $i$ at time $t$. The time-varying control gain $c(t): [0,+\infty)\mapsto[0,+\infty)$ is a continuous function. The matrix-weight $A_{ij}(t)\in\mathbb{R}^{d\times d}$ on edge $(i,j)$ is symmetric, i.e. $A_{ij}^{\top}(t)=A_{ij}(t)$. $A_{ij}(t)$ is either positive (semi)-definite, negative (semi)-definite, or a zero matrix.

The signed Laplacian $L(\mathcal{G}(t))=[L_{ij}(t)]\in\mathbb{R}^{Nd\times Nd}$ corresponding to the signed matrix-weighted graph $\mathcal{G}(t)$, or $L(t)$ for simplicity, is defined as \cite{balancing set}
\begin{equation}\notag
L_{ij}=
\begin{cases}
-A_{ij},\qquad\qquad\ j\neq i, \\
\sum_{k=1,k\neq i }^{N}\lvert A_{ik}\rvert,\ j=i,
\end{cases}
\end{equation}
in which $L_{ij}(t)\in\mathbb{R}^{d\times d}$ is the $(i,j)$th block of $L(t)$.

\section{Problem formulation}\label{Problem formulation}
In reality, systems are often disrupted by noises. For each agent, the information from its neighbors may contain different types of communication/measurement noises, which extends \eqref{FAN} to the following form:
\begin{equation}
\begin{aligned}\label{FAN_noise}
\dot{x}_{i}(t)=&c(t)\sum_{j\in\mathcal{N}_{i}(t)}\left|A_{i j}(t)\right|[\mathbf{sgn}\left(A_{ij}(t)\right) x_{j}(t)-x_{i}(t)+\boldsymbol{1}_{d}\sigma_{ji}\xi_{1ji}(t)\\
&+f_{ji}\left(x_{j}(t)-x_{i}(t)\right)\xi_{2ji}(t)],\ i\in\underline{N},
\end{aligned}
\end{equation}
where $\xi_{lji},\ l=1,2$ denote the measurement noises, $\sigma_{ji}>0$, and function $f_{ji}(\cdot):\mathbb{R}^{d}\mapsto\mathbb{R}^{d}$. We assume that the measurement noises are independent Gaussian white noises. i.e.
\begin{align}\label{Gaussian noises}
\int_{0}^{t}\xi_{lji}(s)ds=w_{lji}(t),\ t\geq0,\ i,j\in\underline{N},\ l=1,2,
\end{align}
where $\{w_{lji}(t),\ i,j\in\underline{N},\ l=1,2\}$ are scalar independent Brownian motions.

The noise intensity $f_{ji}(\cdot)$ needs to satisfy the following assumption.

\begin{assumption}\label{Lipschitz Assumption}
{\rm There exists a positive constant $\bar{\sigma}$ such that
\begin{equation}
\Vert f_{ji}(x)\Vert\leq\bar{\sigma}\Vert x\Vert,\ \forall i,j\in\underline{N}.
\end{equation}
}
\end{assumption}

The primary focus in this paper is to realize non-trivial consensus for system \eqref{FAN_noise}, i.e. driving these $N$ agents toward a non-zero consensus state, where the consensus state value can be arbitrarily configured as needed. Standard definitions of non-trivial consensus in our framework shall be given afterwards.

\begin{remark}
{\rm
The challenges brought by mechanism defined in \eqref{FAN_noise} lie in three aspects. First, the systems are simultaneously affected by additive noises and multiplicative noises, which makes it impossible to seperately tackle only one form of noises as in \cite{noises_etc_Auto2026} \cite{MS_IF2023} \cite{noises_delay_Auto2019} \cite{noises_Markov_TAC2019}. Second, antagonistic  interactions are detrimental to the formation of consensus in the network. Common results on signed networks with noises focus on bipartite consensus (polarization) \cite{noises_signed_2019} \cite{noises_private_TAC2024} \cite{noises_signed_TSMC2022}, and these works are carried out in the framework of scalar-weighted networks. Third, the inter-dimensional communications amongst multi-dimensional agents, which, depicted by matrix-weight edges, introduce significant difficulty to analyse network dynamics. While \cite{Privacy Preservation_noises_matrix-weight_TCNS2025}	\cite{noises_mw_TCNS2020} studied privacy preservation and consensus performance over matrix-weighted networks with noises, they did not take antagonistic interactions and compound noises as in \eqref{FAN_noise} into consideration.
}
\end{remark}

To achieve non-trivial consensus in \eqref{FAN_noise}, the external input will be exerted to some specified agents in the original matrix-weighted FAN system \eqref{FAN_noise}:
\begin{equation}
\begin{aligned}\label{SAN_noise}
\dot{x}_{i}(t)=&c(t)\sum_{j\in\mathcal{N}_{i}(t)}\left|A_{i j}(t)\right|[\mathbf{sgn}\left(A_{ij}(t)\right) x_{j}(t)-x_{i}(t)+\boldsymbol{1}_{d}\sigma_{ji}\xi_{1ji}(t)+f_{ji}\left(x_{j}(t)-x_{i}(t)\right)\xi_{2ji}(t)]\\
&+c(t)\delta_{i}(t)|B_{i}(t)|[\mathbf{sgn}(B_{i}(t))x_{0}(t)-x_{i}(t)],\ i\in\underline{N},
\end{aligned}
\end{equation}
in which $\delta_{i}(t)\geq0, i\in\underline{N}$ are the coupling coefficients between external signal $x_{0}(t)$ and agent $x_{i}(t)$, and real symmetric matrices $B_{i}(t)\in\mathbb{R}^{d\times d},\ i\in\underline{N}$ are the corresponding coupling
matrix weights. Dynamics \eqref{SAN_noise} establishes a semi-autonomous network (SAN) \cite{FAN&SAN}, in which a subset of agents (referred to as informed agents \cite{naive}) are selected to receive external signals to steer the entire network toward the desired non-zero consensus state. At time $t$, the vertex set of informed agents will be denoted as $\mathcal{V}_{\mathcal{I}}(t)$, for each agent $i, v_{i}\in \mathcal{V}_{\mathcal{I}}(t)$ if and only if $\delta_{i}(t)>0$ and $B_{i}(t)\neq\boldsymbol{0}_{d\times d}$. The naive vertex \cite{naive} set is $\mathcal{V}_{\mathcal{N}}(t)=\mathcal{V}\backslash\mathcal{V}_{\mathcal{I}}(t)$. It is time now to give the specific definitions of non-trivial consensus in both the mean square and the almost sure senses:

\begin{definition}[Mean square non-trivial consensus]
{\rm
Under signed matrix-weighted SAN dynamics \eqref{SAN_noise}, if there exist $\delta_{i}(t)\in\mathbb{R},\ B_{i}(t)\in\mathbb{R}^{d\times d},\ i\in\underline{N}$ and $x_{0}\in\mathbb{R}^{d}$, such that for any initial values $x_{i}(0),\ i\in\underline{N}$,
\begin{align}\label{MS definition}
\lim\limits_{t\rightarrow+\infty}\mathbb{E}\left\Vert x_{i}(t)-\boldsymbol{\theta}\right\Vert^{2}=0,\ i\in\underline{N},
\end{align}
where $\boldsymbol{\theta}\neq\boldsymbol{0}_{d}$ is the preset consensus state, then we say that the mean square non-trivial consensus for system \eqref{SAN_noise} is realized.
}
\end{definition}

\begin{definition}[Almost sure non-trivial consensus]
{\rm
Under signed matrix-weighted SAN dynamics \eqref{SAN_noise}, if there exist $\delta_{i}(t)\in\mathbb{R},\ B_{i}(t)\in\mathbb{R}^{d\times d},\ i\in\underline{N}$ and $x_{0}\in\mathbb{R}^{d}$, such that for any initial values $x_{i}(0),\ i\in\underline{N}$,
\begin{align}\label{AS definition}
\lim\limits_{t\rightarrow+\infty}x_{i}(t)-\boldsymbol{\theta}=0,\ a.s., \ i\in\underline{N},
\end{align}
where $\boldsymbol{\theta}\neq\boldsymbol{0}_{d}$ is the preset consensus state, then we say that the almost sure non-trivial consensus for system \eqref{SAN_noise} is realized.
} 
\end{definition}

Given a desired non-trivial consensus state $\boldsymbol{\theta}\neq\textbf{0}_{d}$ for agents $x_{i},\ i\in\underline{N}$, the main task of this paper is to realize mean square and almost sure non-trivial consensus by appropriately designing the coupling matrix weights $B_{i}(t)$, the couplying coefficients $\delta_{i}(t)$ and the external signal $x_{0}(t)$, including the basic problem of selecting the informed agents set $\mathcal{V}_{\mathcal{I}}(t)\subseteq\mathcal{V}$ to receive external signal $x_{0}(t)$.

We introduce the grounded Laplacian $L_{B}(t)$ as
\begin{align}\label{LB}
L_{B}(t)=L(t)+(\Delta(t)\otimes I_{d})\ldotp\mathbf{diag}(|B(t)|),
\end{align}
where $\Delta(t)=\mathbf{diag}(\delta_{1}(t),...,\delta_{N}(t))$, and $B(t)=\left[B_{1}(t);...;B_{N}(t)\right]\in\mathbb{R}^{Nd\times d}$, and $\mathbf{diag}(|B(t)|)=\mathbf{diag}\left(|B_{1}(t)|,...,|B_{N}(t)|\right)$ is the block diagonal matrix with $|B_{i}(t)|$ being its $i$th diagonal block. Based on \eqref{LB}, signed matrix-weighted SAN dynamics \eqref{SAN_noise} can be written in its compact form:
\begin{equation}\label{compact dynamics}
\begin{aligned}
\dot{x}(t)=&-c(t)L_{B}(t)x(t)+c(t)\left(\Delta(t)\otimes I_{d}\right)B(t)x_{0}(t)\\
&+c(t)\left[\sum_{i,j=1}^{N}\eta_{N,i}\otimes\left(\sigma_{ji}|A_{ij}\left(t\right)|\boldsymbol{1}_{d}\right)\right]\xi_{1ji}(t)\\
&+c(t)\left[\sum_{i,j=1}^{N}\eta_{N,i}\otimes\left[|A_{ij}(t)|f_{ji}\left(x_{j}\left(t\right)-x_{i}\left(t\right)\right)\right]\right]\xi_{2ji}(t)
\end{aligned}
\end{equation}
in which $\eta_{N,i}=[0,...,0,1,0,...,0]^{\top}\in\mathbb{R}^{N}$ with only one nonzero element $1$ in its $i$th position.

Recall that the vertex set consisting of every vertex that has incoming negative edge(s) in $\mathcal{G}(t)$ is denoted as $\mathcal{U}(t)$, $\varOmega_{i}(t)$ is the vertex set consisting of all the neighbor vertices of $v_{i}$ that has outgoing negative edge pointing to $v_{i}$ in $\mathcal{G}(t)$, and $v_{i}$'s in-neighbor vertices set at time $t$ is defined as $\mathcal{N}_{i}(t)=\{v_{j}\in\mathcal{V}|\mathbf{sgn}(A_{ij}(t))\neq0\}$, and the out-neighbor vertices set is $\mathcal{N}'_{i}(t)=\{v_{j}\in\mathcal{V}|\mathbf{sgn}(A_{ji}(t))\neq0\}$. Utilizing the above notations, Lemma \ref{Basic Lemma}, a basic result on the properties of grounded Laplacian $L_{B}(t)$ will be provided, which commences with Definition \ref{path definition} and Definition \ref{in-degree-dominated}.

\begin{definition}[positive-negative path]\label{path definition}
{\rm A path in matrix-weighted network is called a positive-negative path, if every edge in this path is either positive definite or negative definite.}
\end{definition}

\begin{definition}[in-degree-dominated]\label{in-degree-dominated}
{\rm In a signed matrix-weighted graph $\mathcal{G}=(\mathcal{V},\mathcal{E},\mathcal{A})$, vertex $v_{i}\in\mathcal{V}$ is called in-degree-dominated, if there holds}
\begin{align}\label{in-degree-dominated equation}
\sum_{j\in\mathcal{N}_{i}}|A_{ij}|-\sum_{j\in\mathcal{N}'_{i}}|A_{ji}|\succeq\boldsymbol{0}.
\end{align}
{\rm in which $\mathcal{N}_{i}$ and $\mathcal{N}'_{i}$ are the in-neighbor and out-neighbor vertex sets of vertex $v_{i}$, respectively.}
\end{definition}

Definition \ref{in-degree-dominated} includes the special case of $\sum_{j\in\mathcal{N}_{i}}|A_{ij}|=\sum_{j\in\mathcal{N}'_{i}}|A_{ji}|$, which can be regarded as the matrix-weighted version of the concept ``balanced node” proposed in \cite{consensus problems}. For detailed explanations, please refer to Remark 2 in \cite{my_matrix-weighted_NTC}. 

\begin{assumption}\label{directed Assumption}
{\rm For any $t\geq0$ and the directed signed matrix-weighted graph $\mathcal{G}(t)=\left(\mathcal{V},\mathcal{E}(t),\mathcal{A}(t)\right)$ of the original FAN \eqref{SAN_noise}, its vertex set $\mathcal{V}$ can be decomposed into $\mathcal{V}=\mathcal{V}_{1}(t)\cup\mathcal{V}_{2}(t)$ with $\mathcal{V}_{1}(t)\cap\mathcal{V}_{2}(t)=\emptyset$, such that
\begin{enumerate}
\item For any vertex $v_{j}\in\mathcal{V}_{2}(t)$, there exists at least one vertex $v_{i}\in\mathcal{V}_{1}(t)$ and one positive-negative path $\mathcal{P}_{ij}(t)$ from $v_{i}$ to $v_{j}$.
\item Every vertex $v_{j}\in\mathcal{V}_{2}(t)$ is in-degree-dominated.
\end{enumerate}}
\end{assumption}

\begin{lemma}[\cite{my_matrix-weighted_NTC}]\label{Basic Lemma}
{\rm Suppose that Assumption \ref{directed Assumption} holds, and $\sum_{j\in\varOmega_{i}(t)}|A_{ij}(t)|\succ\boldsymbol{0}$. For any $t\geq0$, denote
\begin{equation}\label{C in switching NTC thm}\notag
\begin{aligned}
&C(t)=\max\limits_{i\in\mathcal{V}_{1}(t)}\{C_{i}(t)\},\\
&C_{i}(t)=\frac{1}{2}\lambda_{max}\left[\left(\sum_{j\in\varOmega_{i}(t)}|A_{ij}(t)|\right)^{-1}\left(\sum_{j\in\mathcal{N}'_{i}(t)}|A_{ji}(t)|-\sum_{j\in\mathcal{N}_{i}(t)}|A_{ij}(t)|\right)\right],\\ &i\in\mathcal{V}_{1}(t),
\end{aligned}
\end{equation}
choose the informed agents set as $\mathcal{V}_{\mathcal{I}}(t)=\mathcal{U}(t)$, take $\delta_{i}(t)=\delta(t),\ \forall i\in\mathcal{V}_{\mathcal{I}}(t)$. If the non-zero coupling coefficient $\delta(t)>C(t)$, and the coupling matrix weights $B_{i}(t)$ and external control signal $x_{0}(t)$ is designed as
\begin{align}\notag
x_{0}(t)=k_{1}(t)\boldsymbol{\theta},\ |B_{i}(t)|=\sum_{j\in\varOmega_{i}(t)}|A_{ij}(t)|,\ i\in\mathcal{V}_{\mathcal{I}}(t),
\end{align}
in which $\boldsymbol{\theta}\neq\boldsymbol{0}_{d}$ is the desired consensus state, and $k_{1}(t)=1+\frac{2}{\delta(t)}$. Then $-L_{B}(t)$ is Hurwitz matrix for any $t\geq0$, and 
\begin{align}\label{LBt property}
-L_{B}(t)(\boldsymbol{1}_{N}\otimes\boldsymbol{\theta})+k_{1}(t)\left[\Delta(t)\otimes I_{d}\right]B(t)\boldsymbol{\theta}=0,\ \forall t\geq0.
\end{align}	$\hfill\square$
}
\end{lemma}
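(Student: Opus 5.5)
The plan has two independent parts. I would first verify the algebraic identity \eqref{LBt property} block-row by block-row. Then I would prove the Hurwitz property by showing that the symmetric part of $L_{B}(t)$ is positive definite. Throughout, $t$ is fixed and suppressed. I take the coupling weights with $\mathbf{sgn}(B_{i})=1$, i.e. $B_{i}=|B_{i}|=\sum_{j\in\varOmega_{i}}|A_{ij}|$ for $i\in\mathcal{V}_{\mathcal{I}}$; for $i\notin\mathcal{V}_{\mathcal{I}}=\mathcal{U}$ we have $\delta_{i}=0$ and $\varOmega_{i}=\emptyset$.

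\emph{Identity.} The $i$th block row of $L_{B}(\boldsymbol{1}_{N}\otimes\boldsymbol{\theta})$ equals $\big(\sum_{k\neq i}|A_{ik}|-\sum_{j\neq i}A_{ij}\big)\boldsymbol{\theta}+\delta_{i}|B_{i}|\boldsymbol{\theta}$. Since $A_{ij}=\mathbf{sgn}(A_{ij})|A_{ij}|$, the bracket reduces to $2\sum_{j\in\varOmega_{i}}|A_{ij}|$. The row therefore equals $(2+\delta_{i})\sum_{j\in\varOmega_{i}}|A_{ij}|\boldsymbol{\theta}$. On the other side, $k_{1}\delta_{i}B_{i}\boldsymbol{\theta}=(\delta+2)\sum_{j\in\varOmega_{i}}|A_{ij}|\boldsymbol{\theta}$ for $i\in\mathcal{U}$, and both sides vanish for $i\notin\mathcal{U}$. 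This gives \eqref{LBt property}.

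\emph{Hurwitz property.} For $x=[x_{1};\dots;x_{N}]$, I would expand $x^{\top}L_{B}x$ and complete squares edge by edge. Using $-x_{i}^{\top}A_{ij}x_{j}=\tfrac12(x_{i}-\mathbf{sgn}(A_{ij})x_{j})^{\top}|A_{ij}|(x_{i}-\mathbf{sgn}(A_{ij})x_{j})-\tfrac12x_{i}^{\top}|A_{ij}|x_{i}-\tfrac12x_{j}^{\top}|A_{ij}|x_{j}$ and regrouping the diagonal terms by vertex, I expect
\begin{align}\notag
x^{\top}L_{B}x=\frac12\sum_{i}\sum_{j\in\mathcal{N}_{i}}(x_{i}-\mathbf{sgn}(A_{ij})x_{j})^{\top}|A_{ij}|(x_{i}-\mathbf{sgn}(A_{ij})x_{j})+\sum_{i}x_{i}^{\top}M_{i}x_{i},
\end{align}
with $M_{i}=\tfrac12\big(\sum_{j\in\mathcal{N}_{i}}|A_{ij}|-\sum_{j\in\mathcal{N}'_{i}}|A_{ji}|\big)+\delta_{i}|B_{i}|$.

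For $v_{i}\in\mathcal{V}_{2}$, in-degree domination gives $M_{i}\succeq\boldsymbol{0}$ provided the $\delta_{i}|B_{i}|$ term is nonnegative, which always holds. For $v_{i}\in\mathcal{V}_{1}$, set $S_{i}=\sum_{j\in\varOmega_{i}}|A_{ij}|\succ\boldsymbol{0}$ and $D_{i}=\sum_{j\in\mathcal{N}'_{i}}|A_{ji}|-\sum_{j\in\mathcal{N}_{i}}|A_{ij}|$. Then $M_{i}=S_{i}^{1/2}\big(\delta I-\tfrac12S_{i}^{-1/2}D_{i}S_{i}^{-1/2}\big)S_{i}^{1/2}$. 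The eigenvalues of $S_{i}^{-1/2}D_{i}S_{i}^{-1/2}$ coincide with those of $S_{i}^{-1}D_{i}$, so $\delta>C\geq C_{i}$ yields $M_{i}\succ\boldsymbol{0}$. Hence $x^{\top}L_{B}x\geq0$.

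The main obstacle is strictness. Suppose $x^{\top}L_{B}x=0$. Then $x_{i}=\boldsymbol{0}_{d}$ for every $v_{i}\in\mathcal{V}_{1}$, and every edge term vanishes. On a positive or negative \emph{definite} edge this forces $x_{i}=\pm x_{j}$. By Assumption \ref{directed Assumption}, each $v_{j}\in\mathcal{V}_{2}$ is reached from $\mathcal{V}_{1}$ by a positive-negative path, so induction along the path gives $x_{j}=\boldsymbol{0}_{d}$. Only the definite edges are used here, which is exactly why the assumption asks for positive-negative paths rather than merely semidefinite edges.

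Thus $\tfrac12(L_{B}+L_{B}^{\top})\succ\boldsymbol{0}$. For any eigenpair $L_{B}v=\lambda v$ with $v\in\mathbb{C}^{Nd}$, we get $\mathrm{Re}\,\lambda\,\|v\|^{2}=\mathrm{Re}(v^{*}L_{B}v)=v^{*}\tfrac12(L_{B}+L_{B}^{\top})v>0$. Therefore $-L_{B}(t)$ is Hurwitz for every $t\geq0$.
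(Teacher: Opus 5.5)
The paper gives no proof of this lemma. It is quoted from \cite{my_matrix-weighted_NTC}, and the remark after it refers the reader to Theorems 1--2 there, so there is no in-paper argument to compare against. Your proof is correct and self-contained.

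The row computation is right, since $\sum_{j\neq i}(|A_{ij}|-A_{ij})=2\sum_{j\in\varOmega_{i}}|A_{ij}|$. Fixing $\mathbf{sgn}(B_{i})=1$ is not just a convenience but necessary. The lemma only prescribes $|B_{i}(t)|$, and with $\mathbf{sgn}(B_{i})=-1$ the $i$th block row of \eqref{LBt property} would equal $-2(\delta+2)\sum_{j\in\varOmega_{i}}|A_{ij}|\boldsymbol{\theta}\neq\boldsymbol{0}_{d}$.

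Your edge-by-edge decomposition of $x^{\top}L_{B}x$ also checks out. The out-degree term comes from reindexing $\sum_{i}\sum_{j\in\mathcal{N}_{i}}x_{j}^{\top}|A_{ij}|x_{j}$ through $\mathcal{N}'_{j}$. The decomposition shows that $\delta>C_{i}(t)$ is exactly the condition $M_{i}\succ\boldsymbol{0}$, which explains the factor $\tfrac12$ in $C_{i}(t)$.

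One step you use silently deserves a clause. The hypothesis $\sum_{j\in\varOmega_{i}}|A_{ij}|\succ\boldsymbol{0}$ is imposed for $v_{i}\in\mathcal{V}_{1}$, so $\varOmega_{i}\neq\emptyset$ there. Hence $\mathcal{V}_{1}\subseteq\mathcal{U}=\mathcal{V}_{\mathcal{I}}$, and this is what justifies writing $\delta_{i}|B_{i}|=\delta S_{i}$ on $\mathcal{V}_{1}$.

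Beyond the statement, your route proves $L_{B}(t)+L_{B}^{\top}(t)\succ\boldsymbol{0}$ for every $t$. This is stronger than the Hurwitz property and holds for directed $\mathcal{G}(t)$. It is precisely the additional hypothesis of Theorems \ref{switching ms theorem} and \ref{switching AS theorem}, which the remark after Theorem \ref{switching ms theorem} derives from Lemma \ref{Basic Lemma} only in the undirected case. Your computation shows it is automatic under the lemma's design. It likewise shows that $V=\Vert\varepsilon\Vert^{2}$ would already work in the fixed-topology proof, without solving \eqref{LBP}.

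What a pointwise argument cannot deliver is a lower bound on $\lambda_{\min}(L_{B}(t)+L_{B}^{\top}(t))$ that is uniform in $t$. The constant $\Lambda_{2}$ in the proof of Theorem \ref{switching ms theorem} needs such a bound, and positive definiteness at each $t$ together with Assumption \ref{bound assumption} does not imply it.
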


\begin{remark}
{\rm
Throughout the remainder of this paper, the setup in Lemma \ref{Basic Lemma} is adopted by default unless stated otherwise. In particular, we shall take $\delta(t)=C(t)+\Delta C$ without loss of generality, in which $\Delta C$ is any positive constant. Lemma \ref{Basic Lemma} offered the specific strategy to select the informed agents set $\mathcal{V}_{\mathcal{I}}(t)$, to design the couplying matrix weights $B_{i}(t)$, the couplying coefficients $\delta_{i}(t)$ and the external signal $x_{0}(t)$ in response to the time-varying topology $\mathcal{G}(t)$. Detailed discussion of Lemma \ref{Basic Lemma} can be found in Theorem 1 and Theorem 2 from \cite{my_matrix-weighted_NTC}. To understand Assumption \ref{directed Assumption} and Lemma \ref{Basic Lemma} more intuitively, please refer to the specific topology and setup in Subsection \ref{simulation_fixed subsection}.
}
\end{remark}

Following Lemma \ref{Basic Lemma}, for the desired consensus state $\boldsymbol{\theta}\neq\boldsymbol{0}_{d}$, let $\varepsilon(t)\triangleq x(t)-\left(\boldsymbol{1}_{N}\otimes\boldsymbol{\theta}\right)$, by \eqref{LBt property} one has \eqref{compact dynamics} in the form
\begin{equation}\label{epsilon}
\begin{aligned}
d\varepsilon(t)=&-c(t)L_{B}(t)\left[\varepsilon(t)+(\boldsymbol{1}_{N}\otimes\boldsymbol{\theta})\right]dt+c(t)k_{1}(t)\left[\Delta(t)\otimes I_{d}\right]B(t)\boldsymbol{\theta}dt\\
&+c(t)\left[\sum_{i,j=1}^{N}\eta_{N,i}\otimes\left(\sigma_{ji}|A_{ij}\left(t\right)|\boldsymbol{1}_{d}\right)\right]dw_{1ji}(t)\\
&+c(t)\left[\sum_{i,j=1}^{N}\eta_{N,i}\otimes\left[|A_{ij}(t)|f_{ji}\left(x_{j}\left(t\right)-x_{i}\left(t\right)\right)\right]\right]dw_{2ji}(t)\\
=&-c(t)L_{B}(t)\varepsilon(t)dt+c(t)\left[\sum_{i,j=1}^{N}\eta_{N,i}\otimes\left(\sigma_{ji}|A_{ij}\left(t\right)|\boldsymbol{1}_{d}\right)\right]dw_{1ji}(t)\\
&+c(t)\left[\sum_{i,j=1}^{N}\eta_{N,i}\otimes\left[|A_{ij}(t)|f_{ji}\left(x_{j}\left(t\right)-x_{i}\left(t\right)\right)\right]\right]dw_{2ji}(t).
\end{aligned}
\end{equation}
In the subsequent analysis, $\varepsilon(t)$ will be refered to as the non-trivial consensus error, which characterize the difference between agents states and the desired non-trivial consensus value.

\section{Main results}\label{Main results}
\subsection{Fixed topology}
We begin with the fixed topology case, i.e. $\mathcal{G}(t)\equiv\mathcal{G}$, and then extend it to the time-varying case.
\begin{theorem}\label{ms theorem}
{\rm
Consider system \eqref{SAN_noise} with fixed topology $\mathcal{G}$, suppose that Assumption \ref{Lipschitz Assumption} holds for the multiplicative noises intensity $f_{ji}(\cdot)$, and Assumption \ref{directed Assumption} holds for topology $\mathcal{G}$. if the control gain $c(t)$ satisfies
\begin{align}\label{ct in th1}
\int_{0}^{\infty}c(t)dt=\infty,\ \lim\limits_{t\rightarrow+\infty}c(t)=0,
\end{align}
then the mean square non-trivial consensus can be achieved, i.e., \eqref{MS definition} holds for system \eqref{SAN_noise}.
}
\end{theorem}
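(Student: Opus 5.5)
With the topology fixed, $L_B$, $B$, $\Delta$, $\delta$ and $k_1$ are constant. By Lemma \ref{Basic Lemma}, $-L_B$ is Hurwitz. I will work with the error SDE \eqref{epsilon}, i.e.
\[
d\varepsilon=-c(t)L_B\varepsilon\,dt+c(t)\Sigma_1\,dW_1+c(t)\Sigma_2(\varepsilon)\,dW_2 .
\]
Here $\Sigma_1$ is a constant matrix collecting the additive terms $\eta_{N,i}\otimes(\sigma_{ji}|A_{ij}|\boldsymbol{1}_d)$. The matrix $\Sigma_2(\varepsilon)$ collects the multiplicative terms. These depend on $x_j-x_i=\varepsilon_j-\varepsilon_i$, because the $\boldsymbol{\theta}$ parts cancel. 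By Assumption \ref{Lipschitz Assumption} this gives
\[
\|\Sigma_2(\varepsilon)\|_F^2\le \kappa\|\varepsilon\|^2,\qquad \kappa=2\bar\sigma^2\max_{i,j}\|A_{ij}\|^2 N^2,
\]
up to a harmless constant. Since $-L_B$ is Hurwitz, there is $P=P^\top\succ\boldsymbol{0}$ with $PL_B+L_B^\top P=I_{Nd}$.

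The key step is Itô's formula applied to $V(\varepsilon)=\varepsilon^\top P\varepsilon$. Taking expectations (the stochastic integrals are martingales, since second moments are finite on compact intervals for this linear-growth SDE) gives
\[
\frac{d}{dt}\mathbb{E}V\le -c(t)\,\mathbb{E}\|\varepsilon\|^2+c^2(t)\|P\|\big(\|\Sigma_1\|_F^2+\kappa\,\mathbb{E}\|\varepsilon\|^2\big).
\]
Set $m(t)=\mathbb{E}V(t)$, and use $\lambda_{\min}(P)\|\varepsilon\|^2\le V\le\lambda_{\max}(P)\|\varepsilon\|^2$. Since $c(t)\to0$, there is $T$ with $c(t)\|P\|\kappa\le \tfrac12$ for $t\ge T$. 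Then for $t\ge T$,
\[
\dot m\le -\alpha c(t)m+\beta c^2(t),\qquad \alpha=\frac{1}{2\lambda_{\max}(P)},\quad \beta=\|P\|\|\Sigma_1\|_F^2 .
\]
On $[0,T]$, $m$ stays finite by a Gronwall argument, since $c$ is continuous and hence bounded there.

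The remaining step, which I expect to be the main (though standard) obstacle, is showing $m(t)\to0$ from this differential inequality using only $\int_0^\infty c=\infty$ and $c\to0$. Note that no $\int c^2<\infty$ condition is assumed, so the argument must use the decay of $c$ itself. By comparison,
\[
m(t)\le m(T)e^{-\alpha\int_T^t c}+\beta\int_T^t e^{-\alpha\int_s^t c(r)dr}c^2(s)\,ds .
\]
The first term tends to $0$ because $\int c=\infty$. For the second term, fix $\epsilon>0$ and choose $T_1\ge T$ with $c(s)\le\epsilon$ for $s\ge T_1$. Split the integral at $T_1$. The piece over $[T_1,t]$ is at most
\[
\epsilon\int_{T_1}^t e^{-\alpha\int_s^t c}c(s)\,ds\le \epsilon/\alpha .
\]
The piece over $[T,T_1]$ is at most $e^{-\alpha\int_{T_1}^t c}\cdot\mathrm{const}$, which tends to $0$. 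Hence $\limsup_t m(t)\le\beta\epsilon/\alpha$ for every $\epsilon>0$, so $m(t)\to0$.

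Finally, $\mathbb{E}\|x_i(t)-\boldsymbol{\theta}\|^2\le\mathbb{E}\|\varepsilon(t)\|^2\le m(t)/\lambda_{\min}(P)\to0$, which is \eqref{MS definition}.
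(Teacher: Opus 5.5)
Your proof is correct and follows essentially the paper's route: the same Lyapunov function $V=\varepsilon^{\top}P\varepsilon$ with $PL_{B}+L_{B}^{\top}P=I$, It\^{o}'s formula, absorbing the multiplicative-noise term once $c(t)$ is small, and the variation-of-constants bound. The only real difference is that you show $\int_{T}^{t}e^{-\alpha\int_{s}^{t}c}\,c^{2}(s)\,ds\to0$ by an explicit $\epsilon$-splitting argument, where the paper uses L'H\^{o}pital's rule (and you also replace the paper's time-varying rate $\mu_{1}(t)$ by the constant multiple $\alpha c(t)$). One small precision: for the multiplicative-noise stochastic integral to be a true martingale you need $\mathbb{E}\int_{0}^{t}\Vert\varepsilon(s)\Vert^{4}ds<\infty$, i.e.\ fourth rather than just second moments, which still holds for this linear-growth SDE with deterministic initial data, or can be bypassed by a standard localization argument.
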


{\it Proof:}
By \eqref{epsilon}, one can equivalently prove that $\lim\limits_{t\rightarrow+\infty}\mathbb{E}\Vert\varepsilon(t)\Vert^{2}=0$. By Lemma \ref{Basic Lemma}, $-L_{B}$ is Hurwitz, and thus there exists postive definite matrix $P$ such that
\begin{align}\label{LBP}
-PL_{B}-L_{B}^{\top}P=-I_{Nn}
\end{align}
Take Lyapunov function $V(t)=\varepsilon(t)^{\top}P\varepsilon(t)$, then by It$\hat{o}$ formula, one has
\begin{equation}\label{dVt in th1}
\begin{aligned}
dV(t)=&-2c(t)\varepsilon^{\top}(t)PL_{B}\varepsilon(t)dt+dM_{1}(t)\\
&+c^{2}(t)\sum_{i,j=1}^{N}\left(\eta_{N,i}\otimes\sigma_{ji}|A_{ij}|\boldsymbol{1}_{d}\right)^{\top}P\left(\eta_{N,i}\otimes\sigma_{ji}|A_{ij}|\boldsymbol{1}_{d}\right)dt\\
&+c^{2}(t)\sum_{i,j=1}^{N}\left[\eta_{N,i}\otimes|A_{ij}|f_{ji}\left(x_{j}(t)-x_{i}(t)\right)\right]^{\top}P\left[\eta_{N,i}\otimes|A_{ij}|f_{ji}\left(x_{j}(t)-x_{i}(t)\right)\right]dt,
\end{aligned}
\end{equation}
in which
\begin{equation}\notag
\begin{aligned}
dM_{1}(t)=&2c(t)\varepsilon^{\top}(t)P\sum_{i,j=1}^{N}\left(\eta_{N,i}\otimes\sigma_{ji}|A_{ij}|\boldsymbol{1}_{d}\right)dw_{1ji}(t)\\
&+2c(t)\varepsilon^{\top}(t)P\sum_{i,j=1}^{N}\left[\eta_{N,i}\otimes|A_{ij}|f_{ji}\left(x_{j}(t)-x_{i}(t)\right)\right]dw_{2ji}(t).
\end{aligned}
\end{equation} 
Then
\begin{equation}\notag
\begin{aligned}
dV(t)\leq&-c(t)\varepsilon(t)^{\top}\varepsilon(t)dt+dM_{1}(t)+K_{N,P}\cdot c^{2}(t)dt\\
&+\bar{\sigma}Q_{N,P}\cdot c^{2}(t)\left[2\varepsilon^{\top}(t)\left(S_{N}\otimes I_{n}\right)\varepsilon(t)\right],
\end{aligned}
\end{equation}
in which $K_{N,P},\ Q_{N,P}$ are positive constants, $S_{N}\triangleq (NI_{N}-J_{N})\otimes I_{d}$, $J_{N}=\boldsymbol{1}_{N}\boldsymbol{1}_{N}^{\top}$. Then
\begin{equation}\notag
\begin{aligned}
dV(t)\leq&-c(t)\varepsilon^{\top}(t)P^{\frac{1}{2}}(P^{-1})P^{\frac{1}{2}}\varepsilon(t)dt+dM_{1}(t)+K_{N,P}\cdot c^{2}(t)dt\\
&+2\bar{\sigma}Q_{N,P}\cdot c^{2}(t)\varepsilon^{\top}(t)P^{\frac{1}{2}}\left(P^{-\frac{1}{2}}S_{N}P^{-\frac{1}{2}}\right)P^{\frac{1}{2}}\varepsilon(t),
\end{aligned}
\end{equation}
which gives that
\begin{align}\notag
dV(t)\leq-\mu_{1}(t)V(t)dt+dM(t)+K_{N,P}c^{2}(t)dt,
\end{align}
where $\mu_{1}(t)\triangleq c(t)\left[\lambda_{min}(P^{-1})-2\bar{\sigma}Q_{N,P}\lambda_{max}(P^{-\frac{1}{2}}S_{N}P^{-\frac{1}{2}})c(t)\right]>0,\ \forall t\geq t_{0}$ for some $t_{0}\geq0$, since the control gain function $c(t)$ satisfies $\lim\limits_{t\rightarrow+\infty}c(t)=0$. Let $U(t)=e^{\int_{t_{0}}^{t}\mu_{1}(s)ds}V(t)$, then
\begin{equation}\notag
\begin{aligned}
dU(t)&=\mu_{1}(t)e^{\int_{t_{0}}^{t}\mu_{1}(s)ds}V(t)dt+e^{\int_{t_{0}}^{t}\mu_{1}(s)ds}dV(t)\\
&\leq e^{\int_{t_{0}}^{t}\mu_{1}(s)ds}\left[dM(t)+K_{N,P}c^{2}(t)dt\right],
\end{aligned}
\end{equation}
which leads to
\begin{align}\notag
U(t)\leq U(t_{0})+K_{N,P}\int_{t_{0}}^{t}e^{\int_{t_{0}}^{s}\mu_{1}(u)du}c^{2}(s)ds+\int_{t_{0}}^{t}e^{\int_{t_{0}}^{s}\mu_{1}(u)du}dM_{1}(s).
\end{align}
Take expectations on both sides:
\begin{align}\notag
\mathbb{E}(U(t))\leq\mathbb{E}(U(t_{0}))+K_{N,P}\int_{t_{0}}^{t}e^{\int_{t_{0}}^{s}\mu_{1}(u)du}c^{2}(s)ds,
\end{align}
then
\begin{align}\notag
\mathbb{E}(V(t))\leq e^{-\int_{t_{0}}^{t}\mu_{1}(s)ds}\mathbb{E}(V(t_{0}))+K_{N,P}\int_{t_{0}}^{t}e^{-\int_{s}^{t}\mu_{1}(u)du}c^{2}(s)ds,
\end{align}
in which by L'H$\hat{o}$pital's Rule, one has
\begin{equation}\notag
\begin{aligned}
\lim\limits_{t\rightarrow+\infty}\int_{t_{0}}^{t}e^{-\int_{s}^{t}\mu_{1}(u)du}c^{2}(s)ds&=\lim\limits_{t\rightarrow+\infty}\dfrac{\int_{t_{0}}^{t}e^{\int_{0}^{s}\mu_{1}(u)du}c^{2}(s)ds}{e^{\int_{0}^{t}\mu_{1}(u)du}}\\
&=\lim\limits_{t\rightarrow+\infty}\dfrac{e^{\int_{0}^{t}\mu_{1}(u)du}c^{2}(t)}{\mu_{1}(t)\cdot e^{\int_{0}^{t}\mu_{1}(u)du}}\\
&=\lim\limits_{t\rightarrow+\infty}\dfrac{c^{2}(t)}{\mu_{1}(t)}=0,
\end{aligned}
\end{equation}
in which $\int_{0}^{\infty}\mu_{1}(u)du=\infty$ since $\int_{0}^{\infty}c(t)dt=\infty$. Therefore, $\lim\limits_{t\rightarrow+\infty}\mathbb{E}(V(t))=0$, which combining with the positive definiteness of $P$ leads to $\lim\limits_{t\rightarrow+\infty}\mathbb{E}\Vert\varepsilon(t)\Vert^{2}=0$.
$\hfill\blacksquare$

In the following, we examine the almost sure non-trivial consensus result. For this purpose, the semimartingale convergence theorem is introduced.

\begin{lemma}[\cite{semimartingale}, semimartingale convergence theorem]\label{sc lemma}
{\rm
Let $Y_{1}(t)$ and $Y_{2}(t)$ be the two adapted increasing processes on $t\geq0$ with $Y_{1}(0)=Y_{2}(0)=0,\ a.s.$ Let $M(t)$ be a real-valued local martingale with $M(0)=0,\ a.s.$, and let $\zeta$ be a nonnegative measurable random variable. Assume that $X(t)$ is nonnegative and
\begin{align}\notag
X(t)=\zeta+Y_{1}(t)-Y_{2}(t)+M(t),\ t\geq0.
\end{align}
If $\lim\limits_{t\rightarrow+\infty}Y_{1}(t)<\infty,a.s.$, then for almost all $\omega\in\Omega$,
$$\lim\limits_{t\rightarrow+\infty}X(t)<\infty\ \text{and}\ \lim\limits_{t\rightarrow+\infty}Y_{2}(t)<\infty.$$
}
\end{lemma}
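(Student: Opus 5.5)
The plan is the standard localization argument: reduce to a local martingale that is bounded below, use the fact that such a process is a supermartingale and therefore converges almost surely, and then read off the convergence of $X$ and $Y_{2}$ from the identity $X+Y_{2}=\zeta+Y_{1}+M$. I will assume two things that hold in the setting where the lemma is used (all processes there are generated by Itô integrals and Lebesgue integrals): the processes are continuous, or at least $Y_{1}$ has no large jumps, and $\zeta$ is $\mathcal{F}_{0}$-measurable.

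\emph{Step 1 (localization in $Y_{1}$ and $\zeta$).} For $n,k\in\mathbb{N}^{*}$, let $\tau_{n}=\inf\{t\geq0: Y_{1}(t)>n\}$ and $A_{k}=\{\zeta\leq k\}\in\mathcal{F}_{0}$. Consider the stopped process
$$\widetilde{M}(t)=\boldsymbol{1}_{A_{k}}M(t\wedge\tau_{n}).$$
Since $A_{k}\in\mathcal{F}_{0}$, $\widetilde{M}$ is still a local martingale. From $X\geq0$ and $Y_{2}\geq0$ we get
$$M(t\wedge\tau_{n})=X(t\wedge\tau_{n})-\zeta-Y_{1}(t\wedge\tau_{n})+Y_{2}(t\wedge\tau_{n})\geq-\zeta-n.$$
Hence $\widetilde{M}\geq-(k+n)$, where continuity of $Y_{1}$ guarantees $Y_{1}(t\wedge\tau_{n})\leq n$.

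\emph{Step 2 (main obstacle: a local martingale bounded below is a supermartingale).} Take a localizing sequence $\rho_{m}\uparrow\infty$ for $\widetilde{M}$. For $s\leq t$,
$$\mathbb{E}\bigl[\widetilde{M}(t\wedge\rho_{m})+k+n\,\big|\,\mathcal{F}_{s}\bigr]=\widetilde{M}(s\wedge\rho_{m})+k+n .$$
Applying the conditional Fatou lemma to these nonnegative variables as $m\to\infty$ gives
$$\mathbb{E}[\widetilde{M}(t)\mid\mathcal{F}_{s}]\leq\widetilde{M}(s),$$
so $\widetilde{M}$ is a supermartingale bounded below by $-(k+n)$. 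Doob's supermartingale convergence theorem, applied to $\widetilde{M}+k+n\geq0$, then shows that $\widetilde{M}(t)$ converges almost surely to a finite limit. This is the only nontrivial probabilistic step; the rest is pathwise.

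\emph{Step 3 (pathwise conclusion).} Fix $\omega$ in the event $\{\tau_{n}=\infty\}\cap A_{k}$, outside the null set from Step 2. On this event $M(t)=\widetilde{M}(t)$, so $M(t)$ has a finite limit. Moreover $Y_{1}(t)\uparrow\lim Y_{1}\leq n$. Therefore
$$X(t)+Y_{2}(t)=\zeta+Y_{1}(t)+M(t)$$
has a finite limit as $t\to+\infty$, and in particular is bounded. Since $X\geq0$, the increasing process $Y_{2}$ is bounded, hence converges to a finite limit. Consequently $X(t)=\bigl(X(t)+Y_{2}(t)\bigr)-Y_{2}(t)$ also converges to a finite limit.

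\emph{Step 4 (exhausting $\Omega$).} The assumption $\lim_{t\to+\infty}Y_{1}(t)<\infty$ a.s. gives $\bigcup_{n}\{\tau_{n}=\infty\}=\{\sup_{t}Y_{1}(t)<\infty\}$ up to a null set. Since $\zeta$ is finite, $\bigcup_{k}A_{k}=\Omega$. Hence the events $\{\tau_{n}=\infty\}\cap A_{k}$, $n,k\in\mathbb{N}^{*}$, cover $\Omega$ up to a null set. Taking the countable union over $n,k$ yields the claimed almost sure finiteness of $\lim_{t\to+\infty}X(t)$ and $\lim_{t\to+\infty}Y_{2}(t)$.

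If $Y_{1}$ were allowed to jump, I would replace the bound $n$ in Step 1 by $n+\Delta Y_{1}(\tau_{n})$. That quantity is $\mathcal{F}_{\tau_{n}}$-measurable, and the argument would go through with one more localization on the jump size, but in the present application this refinement is not needed.
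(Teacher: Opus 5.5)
The paper gives no proof of this lemma; it only cites Liptser--Shiryayev. Its wording paraphrases the version in Mao's book, but drops two hypotheses of that version: continuity of the processes and $\mathcal{F}_0$-measurability of $\zeta$. Your argument is a correct, self-contained proof once those hypotheses are restored:
\begin{itemize}
\item localizing on $\{\sup_t Y_1(t)\le n\}\cap\{\zeta\le k\}$ gives a local martingale bounded below by $-(k+n)$;
\item conditional Fatou makes it a supermartingale, and Doob gives almost sure convergence;
\item the identity $X+Y_2=\zeta+Y_1+M$, with $X\ge 0$ and $Y_2$ increasing, finishes the argument pathwise;
\item the countable exhaustion over $n,k$ is fine.
\end{itemize}
Compared with the bare citation, your route is elementary and uses continuity only of $Y_1$ (with $M$ c\`adl\`ag). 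The general theorem also covers jumps, but only when $Y_1$ is predictable. In the paper's application everything is an It\^o or Lebesgue integral and $\zeta=V(t_0)$ is $\mathcal{F}_{t_0}$-measurable, so your version is all that is needed.

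The two hypotheses you add are genuinely necessary, because the statement as printed is false when $M$ may jump. Take independent $\xi_i$ with $P(\xi_i=1)=1-p_i$ and $P(\xi_i=-(1-p_i)/p_i)=p_i$, where $\sum_i p_i<\infty$. Use the filtration $\mathcal{F}_t=\sigma(\xi_1,\dots,\xi_{\lfloor t\rfloor})$ and the martingale $M(t)=\sum_{i\le\lfloor t\rfloor}\xi_i$. By Borel--Cantelli, only finitely many $\xi_i$ are negative, so $M(t)\to+\infty$ almost surely while $\inf_t M(t)>-\infty$.
\begin{itemize}
\item Adapted $Y_1$ with jumps: set $Y_1(t)=\sum_{i\le\lfloor t\rfloor}\xi_i^-$, $Y_2=0$, $\zeta=0$. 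Then $X(t)=\sum_{i\le\lfloor t\rfloor}\xi_i^+\ge 0$ and $Y_1(\infty)<\infty$ almost surely, yet $X(t)\to\infty$.
\item Non-$\mathcal{F}_0$-measurable $\zeta$: set $Y_1=Y_2=0$ and $\zeta=-\inf_t M(t)$. The same failure occurs.
\end{itemize}

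Consequently your closing remark about jumps is wrong as stated. The first example is an adapted $Y_1$ with jumps for which the conclusion fails, so no extra localization on $\Delta Y_1(\tau_n)$ can rescue the argument. That quantity is only $\mathcal{F}_{\tau_n}$-measurable, so it cannot be removed the way you removed $\zeta$ using $\mathcal{F}_0$-events. The correct extension assumes $Y_1$ predictable. Then the first time $Y_1$ reaches $n$ is a predictable time, and you stop along an announcing sequence, on which $Y_1<n$. Since this remark lies outside your main argument, the proof itself stands.
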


\begin{theorem}\label{AS theorem}
{\rm
Consider system \eqref{SAN_noise} with fixed topology $\mathcal{G}$, suppose that Assumption \ref{Lipschitz Assumption} holds for the multiplicative noises intensity $f_{ji}(\cdot)$ and Assumption \ref{directed Assumption} holds for topology $\mathcal{G}$. if the control gain $c(t)$ satisfies
\begin{align}\label{ct in th2}
\int_{0}^{\infty}c(t)dt=\infty,\ \lim\limits_{t\rightarrow+\infty}c(t)=0,\ \int_{0}^{\infty}c^{2}(t)dt<\infty,
\end{align}
then the almost sure non-trivial consensus can be achieved, i.e., \eqref{AS definition} holds for system \eqref{SAN_noise}.
}
\end{theorem}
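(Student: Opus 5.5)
We reuse the Lyapunov machinery from the proof of Theorem \ref{ms theorem} and then apply the semimartingale convergence theorem, Lemma \ref{sc lemma}. The argument works with the error process $\varepsilon(t)$ from \eqref{epsilon}, so it suffices to show that $\varepsilon(t)\to\boldsymbol{0}_{Nd}$ almost surely.

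\textbf{Step 1: Lyapunov inequality.} By Lemma \ref{Basic Lemma}, $-L_{B}$ is Hurwitz. Take $P\succ\boldsymbol{0}$ satisfying \eqref{LBP} and set $V(t)=\varepsilon^{\top}(t)P\varepsilon(t)$. Exactly as in the proof of Theorem \ref{ms theorem}, It\^{o}'s formula, Assumption \ref{Lipschitz Assumption} and the bound $\Vert x_{j}-x_{i}\Vert=\Vert\varepsilon_{j}-\varepsilon_{i}\Vert$ give
\begin{align}\notag
dV(t)\leq-\mu_{1}(t)V(t)dt+dM_{1}(t)+K_{N,P}c^{2}(t)dt .
\end{align}
Since $c(t)\to0$, we have $\mu_{1}(t)\geq\frac{1}{2}\lambda_{min}(P^{-1})c(t)>0$ for all $t\geq t_{0}$.

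\textbf{Step 2: Semimartingale decomposition.} Integrating from $t_{0}$ gives
\begin{align}\notag
V(t)=V(t_{0})+Y_{1}(t)-Y_{2}(t)+M(t),
\end{align}
where
\begin{align}\notag
Y_{1}(t)=K_{N,P}\int_{t_{0}}^{t}c^{2}(s)ds,\qquad Y_{2}(t)=\int_{t_{0}}^{t}\mu_{1}(s)V(s)ds+(\text{nonnegative slack}),\qquad M(t)=\int_{t_{0}}^{t}dM_{1}(s).
\end{align}
Because the It\^{o} expansion is an equality, the slack term (the gap between the exact drift and the upper bound) can be absorbed into $Y_{2}$, so $Y_{2}$ is an adapted increasing process. $M$ is a continuous local martingale with $M(t_{0})=0$, since its integrands are continuous and adapted. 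The extra hypothesis $\int_{0}^{\infty}c^{2}(t)dt<\infty$ in \eqref{ct in th2} gives $\lim_{t\to\infty}Y_{1}(t)<\infty$ deterministically. Lemma \ref{sc lemma}, after shifting time by $t_{0}$, then yields, almost surely, that $\lim_{t\to\infty}V(t)=V_{\infty}<\infty$ exists and that $\int_{t_{0}}^{\infty}\mu_{1}(s)V(s)ds<\infty$.

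\textbf{Step 3: Identifying the limit.} From Step 2, $\int_{t_{0}}^{\infty}c(s)V(s)ds<\infty$ almost surely, while $\int_{t_{0}}^{\infty}c(s)ds=\infty$. Suppose $V_{\infty}(\omega)>0$ on a set of positive probability. On that set, $V(s)\geq V_{\infty}/2$ for all large $s$, so $\int c(s)V(s)ds=\infty$, a contradiction. Hence $V_{\infty}=0$ almost surely. Since $P\succ\boldsymbol{0}$, this gives $\Vert\varepsilon(t)\Vert^{2}\leq V(t)/\lambda_{min}(P)\to0$ almost surely, so $x_{i}(t)\to\boldsymbol{\theta}$ almost surely for every $i\in\underline{N}$, which is \eqref{AS definition}.

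The main obstacle is making Step 2 rigorous rather than merely formal. The It\^{o} expansion must be written as an exact identity, with the noise-induced quadratic terms $c^{2}\varepsilon^{\top}(\cdot)\varepsilon$ grouped into the drift. The multiplicative contribution is dominated by $\mu_{1}$, and this is exactly what requires $t\geq t_{0}$. The resulting nonnegative remainder then has to be placed into $Y_{2}$ so that it is genuinely increasing. An alternative route would be to check $\mathbb{E}V(t_{0})<\infty$ and use a localization argument for $M$, but Lemma \ref{sc lemma} only requires a local martingale, so no integrability of $M$ beyond continuity of its integrands should be needed.
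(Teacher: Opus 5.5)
Your proposal is correct, and Steps 1--2 match the paper's argument. You use the same $V=\varepsilon^{\top}P\varepsilon$ and the same splitting into three parts:
$Y_{1}$, the additive-noise It\^{o} term, which stays finite because $\int_{0}^{\infty}c^{2}<\infty$;
$Y_{2}$, the dissipation minus the multiplicative It\^{o} correction, shown to be increasing via $Y_{2}\geq\int\mu_{1}V$ for $t\geq t_{0}$;
$M$, a continuous local martingale.
For $M$, the paper checks the local-martingale property with stopping times $\tau_{n}$, while you rely on continuity of the integrands; the two are equivalent.

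The genuine difference is Step 3.

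The paper uses only the first conclusion of Lemma \ref{sc lemma}, namely that $V(t)$ converges a.s. to a finite limit. It then identifies that limit as $0$ by importing Theorem \ref{ms theorem}: $\mathbb{E}V(t)\to0$ forces $V(t_{k})\to0$ a.s. along some sequence $t_{k}\to\infty$, so the a.s. limit must vanish.

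You instead use the lemma's second conclusion, $\lim_{t\to\infty}Y_{2}(t)<\infty$. Combined with $\mu_{1}\geq\frac{1}{2}\lambda_{min}(P^{-1})c$, this gives $\int_{t_{0}}^{\infty}c(s)V(s)ds<\infty$ a.s. Then $\int_{0}^{\infty}c=\infty$ rules out a positive limit pathwise.

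Your route buys three things:
it makes the almost-sure theorem self-contained, since it no longer depends on the mean-square theorem or its L'H\^{o}pital computation;
it yields the extra pathwise bound $\int_{t_{0}}^{\infty}c(s)\Vert\varepsilon(s)\Vert^{2}ds<\infty$ a.s.;
it transfers verbatim to Theorem \ref{switching AS theorem} with $V=\Vert\varepsilon\Vert^{2}$ and $\mu_{2}$.

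The paper's route is shorter once Theorem \ref{ms theorem} is available. It also needs nothing from the lemma beyond convergence of $X(t)$.
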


{\it Proof:}
By \eqref{dVt in th1},
\begin{equation}\notag
\begin{aligned}
V(t)=&V(t_{0})+\sum_{i,j=1}^{N}\left(\eta_{N,i}\otimes\sigma_{ji}|A_{ij}|\boldsymbol{1}_{d}\right)^{\top}P\left(\eta_{N,i}\otimes\sigma_{ji}|A_{ij}|\boldsymbol{1}_{d}\right)\int_{t_0}^{t}c^{2}(s)ds\\
&+\sum_{i,j=1}^{N}\int_{t_0}^{t}\left[\eta_{N,i}\otimes|A_{ij}|f_{ji}\left(x_{j}(s)-x_{i}(s)\right)\right]^{\top}P\left[\eta_{N,i}\otimes|A_{ij}|f_{ji}\left(x_{j}(s)-x_{i}(s)\right)\right]\cdot c^{2}(s)ds\\
&-2\int_{t_0}^{t}c(s)\varepsilon^{\top}(s)PL_{B}\varepsilon(s)ds\\
&+2\sum_{i,j=1}^{N}\int_{t_0}^{t}c(s)\varepsilon^{\top}(s)P\left(\eta_{N,i}\otimes\sigma_{ji}|A_{ij}|\boldsymbol{1}_{d}\right)dw_{1ji}(s)\\
&+2\sum_{i,j=1}^{N}\int_{t_0}^{t}c(s)\varepsilon^{\top}(s)P\left[\eta_{N,i}\otimes|A_{ij}|f_{ji}\left(x_{j}(s)-x_{i}(s)\right)\right]dw_{2ji}(s)
\end{aligned}
\end{equation}
Following notations in Lemma \ref{sc lemma}, one can denote
\begin{equation}\notag
\begin{aligned}
Y_{1}(t)=&\sum_{i,j=1}^{N}\left(\eta_{N,i}\otimes\sigma_{ji}|A_{ij}|\boldsymbol{1}_{d}\right)^{\top}P\left(\eta_{N,i}\otimes\sigma_{ji}|A_{ij}|\boldsymbol{1}_{d}\right)\int_{t_0}^{t}c^{2}(s)ds,\\
Y_{2}(t)=&-\sum_{i,j=1}^{N}\int_{t_0}^{t}\left[\eta_{N,i}\otimes|A_{ij}|f_{ji}\left(x_{j}(s)-x_{i}(s)\right)\right]^{\top}P\left[\eta_{N,i}\otimes|A_{ij}|f_{ji}\left(x_{j}(s)-x_{i}(s)\right)\right]\cdot c^{2}(s)ds\\
&+2\int_{t_0}^{t}c(s)\varepsilon^{\top}(s)PL_{B}\varepsilon(s)ds,\\
M(t)=&2\sum_{i,j=1}^{N}\int_{t_0}^{t}c(s)\varepsilon^{\top}(s)P\left(\eta_{N,i}\otimes\sigma_{ji}|A_{ij}|\boldsymbol{1}_{d}\right)dw_{1ji}(s)\\
&+2\sum_{i,j=1}^{N}\int_{t_0}^{t}c(s)\varepsilon^{\top}(s)P\left[\eta_{N,i}\otimes|A_{ij}|f_{ji}\left(x_{j}(s)-x_{i}(s)\right)\right]dw_{2ji}(s),
\end{aligned}
\end{equation}
and $\zeta=V(t_0)$. In which $Y_{1}(t)$ is increasing process on $t\geq t_{0}$ with $Y_{1}(t_{0})=0,\ a.s.$ and $\lim\limits_{t\rightarrow+\infty}Y_{1}(t)<\infty,a.s.$ since $\int_{0}^{\infty}c^{2}(t)dt<\infty$.
\begin{equation}\notag
\begin{aligned}
Y_{2}(t)=&\int_{t_0}^{t}c(s)\Vert\varepsilon(t)\Vert^2\\
&-\sum_{i,j=1}^{N}\left[\eta_{N,i}\otimes|A_{ij}|f_{ji}\left(x_{j}(s)-x_{i}(s)\right)\right]^{\top}P\left[\eta_{N,i}\otimes|A_{ij}|f_{ji}\left(x_{j}(s)-x_{i}(s)\right)\right]\cdot c^{2}(s)ds\\
\geq&\int_{t_0}^{t}c(s)\left[\lambda_{min}(P^{-1})-2\bar{\sigma}Q_{N,P}\lambda_{max}(P^{-\frac{1}{2}S_{N}P^{-\frac{1}{2}}})\right]V(s)ds\\
=&\int_{t_0}^{t}\mu(s)V(s)ds,
\end{aligned}
\end{equation}
since $\mu(s)>0,\ \forall s\geq t_{0}$, one has that $Y_{2}(t)$ is increasing process with $Y_{2}(t_0)=0,\ a.s.$\\
Take stopping times 
\begin{align}\notag
\tau_{n}=inf\{\Vert\varepsilon(t)\Vert\geq n\ \text{or}\ \int_{0}^{t}c^{2}(s)ds\geq n\},	
\end{align}
then by continuity one has that $\lim\limits_{t\rightarrow+\infty}\tau_{n}=+\infty$. Obviously,
\begin{align}\notag
\mathbb{E}\int_{0}^{t\wedge\tau_{n}}H_{1ji}^{2}(s)ds<\infty,\ \mathbb{E}\int_{0}^{t\wedge\tau_{n}}H_{2ji}^{2}(s)ds<\infty,
\end{align}
in which
\begin{align}\notag
H_{1ji}(s)=c(s)\varepsilon^{\top}(s)P\left(\eta_{N,i}\otimes\sigma_{ji}|A_{ij}|\boldsymbol{1}_{d}\right),
\end{align}
and
\begin{align}\notag
H_{2ji}(s)&=c(s)\varepsilon^{\top}(s)P\left[\eta_{N,i}\otimes|A_{ij}|f_{ji}\left(x_{j}(s)-x_{i}(s)\right)\right].
\end{align}

Therefore, $M(t)$ is real-valued local martingale with $M(0)=0,\ a.s.$ By Lemma \ref{sc lemma} one has that $\lim\limits_{t\rightarrow+\infty}V(t)<\infty,\ a.s.$ Theorem \ref{ms theorem} shows that $\lim\limits_{t\rightarrow+\infty}\mathbb{E}(V(t))=0$, which implies that there exists a subsequence converging to $0$ almost surely. The uniqueness of limit leads to $\lim\limits_{t\rightarrow+\infty}V(t)=0,\ a.s.$ Equivalently, one has that
\begin{align}\notag
\lim\limits_{t\rightarrow+\infty}x_{i}(t)-\boldsymbol{\theta}=0,\ a.s.,\ i\in\underline{N}.
\end{align}
$\hfill\blacksquare$

\begin{remark}\label{ct Remark}
{\rm
The design of control gain $c(t)$ is essential for realizing the desired non-trivial consensus result. In conditions \eqref{ct in th2}, first, $\int_{0}^{\infty}c(t)dt=\infty$ ensures the cumulative effect of the consensus protocol does not vanish over time, which allows the agents to exchange sufficient information so that their state errors can be persistently reduced. Second, $\lim\limits_{t\rightarrow+\infty}c(t)=0$ requires the interaction gain to decay asymptotically, which gradually attenuating the influence of measurement noises. As time evolves, a diminishing gain prevents persistent random fluctuations from dominating the system dynamics and destabilizing the asymptotic behavior. Last, condition $\int_{0}^{\infty}c^{2}(t)dt<\infty$ characterizes the finite energy of the noise-induced effects. Since the stochastic disturbances enter the system multiplied by $c(t)$, this condition ensures that the accumulated variance contributed by the noises remains bounded over time. These conditions play a crucial role in guaranteeing mean square and almost sure non-trivial consensus for the signed matrix-weighted networks under compound noises in our study.
}
\end{remark}

\subsection{Time-varying topology}

For the time-varying topology $\mathcal{G}(t)=(\mathcal{V},\mathcal{E}(t),\mathcal{A}(t))$, the following Assumption on the boundedness of elements in edge weight matrices is proposed.

\begin{assumption}\label{bound assumption}
{\rm There exists a positive constant $D>0$ such that
\begin{align}\notag
|a_{kl}^{ij}(t)|\leq D,\ \forall i,j\in\underline{N},\ k,l\in\underline{d},\ t\geq0,
\end{align}
where $a_{kl}^{ij}(t)$ is the $(k,l)$th element of matrix $A_{ij}(t)$.
}
\end{assumption}

\begin{remark}
{\rm
Assumption \ref{bound assumption} serves as a relatively mild condition for networks with time-varying topologies. In relevant achievements \cite{noises_SIAM2018} \cite{switching matrix-weighted consensus} \cite{switching matrix-weighted cluster consensus}, topologies ``$\mathcal{G}(t)$"s are chosen from a finite set. Besides, in \cite{switching matrix-weighted consensus} \cite{switching matrix-weighted cluster consensus} that studied consensus and cluster consensus on switching matrix-weighted networks,   ``$\mathcal{G}(t)$''s need to hold constants in every dwell interval ``$\left[t_{k},t_{k+1}\right)$'', in which the dwell time ``$\Delta t_{k}=t_{k+1}-t_{k}$'' are lower bounded by a positive constant. By comparison, Assumption \ref{bound assumption} promotes the applicability of concept ``time-varying topology" in the matrix weighted network consensus algorithms field.
}
\end{remark}

\begin{theorem}\label{switching ms theorem}
{\rm
Consider system \eqref{SAN_noise} with time-varying topology $\mathcal{G}(t)$, suppose that Assumption \ref{Lipschitz Assumption} holds for the multiplicative noises intensity functions $f_{ji}(\cdot)$ and Assumption \ref{directed Assumption}, Asuumption \ref{bound assumption} hold for topology $\mathcal{G}(t)$. if the control gain $c(t)$ satisfies
\begin{align}\label{ct in th3}
\int_{0}^{\infty}c(t)dt=\infty,\ \lim\limits_{t\rightarrow+\infty}c(t)=0,
\end{align}
and matrix $\left(L_{B}(t)+L_{B}^{\top}(t)\right)$ is positive definite fot any $t\geq0$, then the mean square non-trivial consensus can be achieved, i.e., \eqref{MS definition} holds for system \eqref{SAN_noise}.
}
\end{theorem}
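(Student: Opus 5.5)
The plan is to repeat the Lyapunov--Itô argument of Theorem \ref{ms theorem}, with one change. The Lyapunov matrix $P$ from \eqref{LBP} would now depend on $t$ through $L_{B}(t)$, and its derivative is not controlled by Assumption \ref{bound assumption}. So instead of $P$ I will use the identity-weighted function $V(t)=\varepsilon^{\top}(t)\varepsilon(t)$. This is exactly where the extra hypothesis $L_{B}(t)+L_{B}^{\top}(t)\succ\boldsymbol{0}$ is needed.

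\textbf{Itô expansion.} Applying Itô's formula to \eqref{epsilon} gives
\begin{equation}\notag
dV(t)=-c(t)\varepsilon^{\top}(t)\left(L_{B}(t)+L_{B}^{\top}(t)\right)\varepsilon(t)dt+c^{2}(t)\left[T_{1}(t)+T_{2}(t)\right]dt+dM_{1}(t).
\end{equation}
Here $T_{1}(t)=\sum_{i,j}\sigma_{ji}^{2}\Vert |A_{ij}(t)|\boldsymbol{1}_{d}\Vert^{2}$ and $T_{2}(t)=\sum_{i,j}\Vert |A_{ij}(t)|f_{ji}(x_{j}-x_{i})\Vert^{2}$. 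The term $M_{1}$ is the stochastic integral against the Brownian motions $w_{1ji},w_{2ji}$, and it will be localized by stopping times as in Theorem \ref{AS theorem}.

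\textbf{Uniform bounds on the noise terms.} Assumption \ref{bound assumption} gives every entry of $A_{ij}(t)$ magnitude at most $D$. Since $|A_{ij}(t)|=\pm A_{ij}(t)$, this yields $\Vert |A_{ij}(t)|\Vert\leq dD$ uniformly in $t$.
\begin{itemize}
\item The additive term satisfies $T_{1}(t)\leq K:=N^{2}d^{3}D^{2}\max_{i,j}\sigma_{ji}^{2}$, a constant independent of $t$.
\item For the multiplicative term, note that $x_{j}-x_{i}=\varepsilon_{j}-\varepsilon_{i}$. By Assumption \ref{Lipschitz Assumption}, $T_{2}(t)\leq d^{2}D^{2}\bar{\sigma}^{2}\sum_{i,j}\Vert\varepsilon_{j}-\varepsilon_{i}\Vert^{2}=2d^{2}D^{2}\bar{\sigma}^{2}\varepsilon^{\top}S_{N}\varepsilon\leq Q\,V(t)$, with $Q=2Nd^{2}D^{2}\bar{\sigma}^{2}$.
\end{itemize}

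\textbf{Drift term (main obstacle).} I need a lower bound of the form $\varepsilon^{\top}(L_{B}+L_{B}^{\top})\varepsilon\geq\lambda_{0}V$ with $\lambda_{0}>0$ independent of $t$. Pointwise positive definiteness alone does not exclude $\lambda_{\min}(L_{B}(t)+L_{B}^{\top}(t))\to0$, and Assumption \ref{bound assumption} only gives upper bounds on the entries. I will therefore read the hypothesis as uniform, i.e. $\lambda_{\min}(L_{B}(t)+L_{B}^{\top}(t))\geq\lambda_{0}>0$ for all $t\ge0$.

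If only the pointwise statement is available, I would set $\lambda(t)=\lambda_{\min}(L_{B}(t)+L_{B}^{\top}(t))$ and carry the argument through with $\mu(t)=c(t)[\lambda(t)-Qc(t)]$. This requires $\int_{0}^{\infty}c\lambda\,dt=\infty$ and $c(t)/\lambda(t)\to0$ in place of the corresponding conditions on $c$ alone. One can also try to extract a uniform $\lambda_{0}$ from the explicit construction in Lemma \ref{Basic Lemma}, using $\delta(t)=C(t)+\Delta C$ and $|B_{i}(t)|=\sum_{j\in\varOmega_{i}(t)}|A_{ij}(t)|$. I expect this to be delicate, so I would fall back on the uniform reading.

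\textbf{Conclusion as in Theorem \ref{ms theorem}.} With the uniform bound,
\begin{equation}\notag
dV(t)\leq-\mu_{1}(t)V(t)dt+Kc^{2}(t)dt+dM_{1}(t),\qquad \mu_{1}(t)=c(t)\left[\lambda_{0}-Qc(t)\right].
\end{equation}
Since $c(t)\to0$, we have $\mu_{1}(t)\geq\frac{\lambda_{0}}{2}c(t)>0$ for all $t\geq t_{0}$.
\begin{enumerate}
\item Multiply by $e^{\int_{t_{0}}^{t}\mu_{1}}$, integrate, localize $M_{1}$ by stopping times, and take expectations. This gives
\begin{equation}\notag
\mathbb{E}V(t)\leq e^{-\int_{t_{0}}^{t}\mu_{1}}\,\mathbb{E}V(t_{0})+K\int_{t_{0}}^{t}e^{-\int_{s}^{t}\mu_{1}}c^{2}(s)ds.
\end{equation}
\item The first term vanishes because $\int^{\infty}\mu_{1}\geq\frac{\lambda_{0}}{2}\int^{\infty}c=\infty$.
\item The second term vanishes by L'Hôpital's rule, since its limit is $\lim c^{2}/\mu_{1}\leq\lim 2c/\lambda_{0}=0$.
\end{enumerate}
Hence $\mathbb{E}\Vert\varepsilon(t)\Vert^{2}\to0$, which is \eqref{MS definition}.
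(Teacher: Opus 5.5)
Your proposal is the paper's proof essentially line for line: the identity-weighted Lyapunov function $V=\varepsilon^{\top}\varepsilon$, uniform bounds on the two It\^{o} correction terms from Assumptions~\ref{Lipschitz Assumption} and~\ref{bound assumption}, and then the integrating-factor and L'H\^{o}pital argument of Theorem~\ref{ms theorem} (you also have the correct factor $\bar{\sigma}^{2}$ where the paper writes $\bar{\sigma}$). The paper simply asserts a uniform $\Lambda_{2}\leq\lambda_{\min}(L_{B}(t)+L_{B}^{\top}(t))$ from pointwise positive definiteness, so your uniform reading is exactly what its proof silently uses, and it cannot be extracted from the stated hypotheses or from Lemma~\ref{Basic Lemma}'s construction: scaling every weight of a fixed admissible graph by $\alpha(t)=1/(1+t)$ preserves all assumptions (with $\delta$ unchanged) but gives $L_{B}(t)=\alpha(t)L_{B}$, and with $c(t)=1/(1+t)$ one gets $\mathbb{E}\varepsilon(t)=e^{-(\int_{0}^{t}c(s)\alpha(s)ds)L_{B}}\varepsilon(0)\to e^{-L_{B}}\varepsilon(0)\neq\boldsymbol{0}$, so mean square consensus fails.
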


{\it Proof:}
Take Lyapunov function $V(t)=\varepsilon(t)^{\top}\varepsilon(t)$, then by It$\hat{o}$ formula, one has
\begin{equation}\label{dVt in th3}
\begin{aligned}
dV(t)=&-2c(t)\varepsilon^{\top}(t)L_{B}(t)\varepsilon(t)dt+dM_{2}(t)\\
&+c^{2}(t)\sum_{i,j=1}^{N}\left(\eta_{N,i}\otimes\sigma_{ji}|A_{ij}(t)|\boldsymbol{1}_{d}\right)^{\top}\left(\eta_{N,i}\otimes\sigma_{ji}|A_{ij}(t)|\boldsymbol{1}_{d}\right)dt\\
&+c^{2}(t)\sum_{i,j=1}^{N}\left[\eta_{N,i}\otimes|A_{ij}(t)|f_{ji}\left(x_{j}(t)-x_{i}(t)\right)\right]^{\top}\left[\eta_{N,i}\otimes|A_{ij}(t)|f_{ji}\left(x_{j}(t)-x_{i}(t)\right)\right]dt,
\end{aligned}
\end{equation}

in which $dM_{2}(t)=2c(t)\varepsilon^{\top}(t)\sum_{i,j=1}^{N}\left(\eta_{N,i}\otimes\sigma_{ji}|A_{ij}(t)|\boldsymbol{1}_{d}\right)dw_{1ji}(t)+2c(t)\varepsilon^{\top}(t)\\
\sum_{i,j=1}^{N}\left[\eta_{N,i}\otimes|A_{ij}(t)|f_{ji}\left(x_{j}(t)-x_{i}(t)\right)\right]dw_{2ji}(t)$. Therefore,
\begin{equation}\label{varepsilon in th3}
\begin{aligned}
d\Vert\varepsilon(t)\Vert^{2}=&-2c(t)\varepsilon^{\top}(t)L_{B}(t)\varepsilon(t)dt+dM_{2}(t)+c^{2}(t)\cdot\sum_{i,j=1}^{N}\sigma_{ji}^{2}\left[\sum_{l=1}^{d}\left(\sum_{k=1}^{d}a_{lk}^{|ij|}(t)\right)^{2}\right]dt\\
&+c^{2}(t)\sum_{i,j=1}^{N}f_{ji}^{\top}\left(x_{j}(t)-x_{i}(t)\right)|A_{ij}(t)|^{2}f_{ji}\left(x_{j}(t)-x_{i}(t)\right)dt,
\end{aligned}
\end{equation}
in which $a_{lk}^{|ij|}(t)$ is the $(l,k)$th element of matrix $|A_{ij}(t)|$. By Assumption \ref{bound assumption} and the continuous dependence of matrix eigenvalues on the entries, there exist positive constants $\bar{p},\ \Lambda_{1}$ such that
\begin{align}\notag
\sum_{i,j=1}^{N}\sigma_{ji}^{2}\left[\sum_{l=1}^{d}\left(\sum_{k=1}^{d}a_{lk}^{|ij|}(t)\right)^{2}\right]\leq\bar{p},\ \lambda_{max}^{2}(|A_{ij}(t)|)\leq\Lambda_{1},\ \forall i,j=1,...,N,\ \forall t\geq0,
\end{align}
and positive constant $\Lambda_{2}$ such that
\begin{align}\notag
\lambda_{min}\left(L_{B}(t)+L_{B}^{\top}(t)\right)\geq\Lambda_{2}
\end{align}
since $\left(L_{B}(t)+L_{B}^{\top}(t)\right)\succ\boldsymbol{0}$. Therefore, one has
\begin{equation}\notag
\begin{aligned}
d\Vert\varepsilon(t)\Vert^{2}\leq&-c(t)\varepsilon^{\top}(t)\left(L_{B}(t)+L_{B}^{\top}(t)\right)\varepsilon(t)dt+dM_{2}(t)+\bar{p}c^{2}(t)dt\\
&+\bar{\sigma}\Lambda_{1}\cdot c^{2}(t)\cdot\sum_{i,j=1}^{N}\Vert\varepsilon_{i}(t)-\varepsilon_{j}(t)\Vert^{2}dt\\
\leq&-\Lambda_{2}c(t)\Vert\varepsilon(t)\Vert^{2}dt+dM_{2}(t)+\bar{p}c^{2}(t)dt+2\bar{\sigma}\Lambda_{1}\cdot\lambda_{max}(S_{N})\cdot c^{2}(t)\Vert\varepsilon(t)\Vert^{2}dt\\
=&-\mu_{2}(t)\Vert\varepsilon(t)\Vert^{2}dt+\bar{p}c^{2}(t)dt+dM_{2}(t),
\end{aligned}
\end{equation}

in which $S_{N}\triangleq (NI_{N}-J_{N})\otimes I_{d},\ J_{N}=\boldsymbol{1}_{N}\boldsymbol{1}_{N}^{\top}$, and
\begin{align}\notag
\mu_{2}(t)=c(t)\left[\Lambda_{2}-2\bar{\sigma}\Lambda_{1}\lambda_{max}(S_{N})c(t)\right]=c(t)\left[\Lambda_{2}-2\bar{\sigma}\Lambda_{1}Nc(t)\right].
\end{align}
By \eqref{ct in th3}, there exists $t_{0}\geq0$ such that $\mu_{2}(t)>0,\ \forall t\geq t_{0}$. Similar to the proof of Theorem \ref{ms theorem}, one can deduce that $\lim\limits_{t\rightarrow+\infty}\mathbb{E}\Vert\varepsilon(t)\Vert^{2}=0$.
$\hfill\blacksquare$

\begin{remark}
{\rm
According to Theorem \ref{switching ms theorem}, establishing mean square non-trivial consensus for time-varying networks requires the matrix $\left(L_{B}(t)+L_{B}^{\top}(t)\right)$ to be positive definite. In fact, the time-varying topology makes it scarcely possible to derive a common matrix $P\succ\boldsymbol{0}$ with \eqref{LBP} holds for $L_{B}(t),\ t\in\left[0,+\infty\right)$. Therefore, a different form of Lyapunov function is utilized in Theorem \ref{switching ms theorem}. By Lemma \ref{Basic Lemma}, the condition $\left(L_{B}(t)+L_{B}^{\top}(t)\right)\succ\boldsymbol{0}$ can be satisfied provided that $\mathcal{G}(t)$ is undirected. Moreover, this condition holds for all the directed topologies adopted for simulation examples in Subsection \ref{simulation_varying}, and they were not specially selected for this property.
}
\end{remark}

\begin{theorem}\label{switching AS theorem}
{\rm
Consider system \eqref{SAN_noise} with time-varying topology $\mathcal{G}(t)$, suppose that Assumption \ref{Lipschitz Assumption} holds for the multiplicative noises intensity $f_{ji}(\cdot)$ and Assumption \ref{directed Assumption}, Assumption \ref{bound assumption} hold for topology $\mathcal{G}(t)$. if the control gain $c(t)$ satisfies
\begin{align}\label{ct in th4}
\int_{0}^{\infty}c(t)dt=\infty,\ \lim\limits_{t\rightarrow+\infty}c(t)=0,\ \int_{0}^{\infty}c^{2}(t)dt<\infty,
\end{align}
and matrix $\left(L_{B}(t)+L_{B}^{\top}(t)\right)$ is positive definite, then the almost sure non-trivial consensus can be achieved, i.e., \eqref{AS definition} holds for system \eqref{SAN_noise}.
}
\end{theorem}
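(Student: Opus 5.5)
The plan is to combine the Lyapunov estimate from the proof of Theorem \ref{switching ms theorem} with the semimartingale convergence theorem (Lemma \ref{sc lemma}), in the same way Theorem \ref{AS theorem} builds on Theorem \ref{ms theorem}. We take $V(t)=\Vert\varepsilon(t)\Vert^{2}$ and integrate \eqref{varepsilon in th3} from $t_{0}$ to $t$, where $t_{0}$ is chosen so that $\mu_{2}(t)=c(t)\left[\Lambda_{2}-2\bar{\sigma}\Lambda_{1}Nc(t)\right]>0$ for all $t\geq t_{0}$. Such a $t_{0}$ exists because $c(t)\to0$. The uniform constants $\bar{p},\Lambda_{1},\Lambda_{2}$ are supplied by Assumption \ref{bound assumption} and the uniform positive definiteness of $L_{B}(t)+L_{B}^{\top}(t)$, exactly as in the proof of Theorem \ref{switching ms theorem}. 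The integrated identity gives the decomposition
\begin{align}\notag
V(t)=\zeta+Y_{1}(t)-Y_{2}(t)+M(t),
\end{align}
where the pieces are defined as follows.
\begin{itemize}
\item $\zeta=V(t_{0})$.
\item $Y_{1}(t)$ is the additive-noise contribution $\int_{t_{0}}^{t}c^{2}(s)\sum_{i,j}\sigma_{ji}^{2}\sum_{l}\bigl(\sum_{k}a^{|ij|}_{lk}(s)\bigr)^{2}ds$.
\item $Y_{2}(t)$ is $\int_{t_{0}}^{t}2c(s)\varepsilon^{\top}L_{B}(s)\varepsilon\,ds$ minus the multiplicative-noise Itô correction.
\item $M(t)=\int_{t_{0}}^{t}dM_{2}(s)$.
\end{itemize}

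The next step is to verify the hypotheses of Lemma \ref{sc lemma}. $Y_{1}$ is increasing, vanishes at $t_{0}$, and is bounded by $\bar{p}\int_{t_{0}}^{\infty}c^{2}(s)ds<\infty$ by \eqref{ct in th4}. The integrand of $Y_{2}$ is bounded below by $\mu_{2}(s)V(s)\geq0$ for $s\geq t_{0}$, using the same chain of inequalities as in Theorem \ref{switching ms theorem}. Hence $Y_{2}$ is increasing with $Y_{2}(t_{0})=0$.

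For $M$, we use stopping times $\tau_{n}=\inf\{t\geq t_{0}:\Vert\varepsilon(t)\Vert\geq n\}$. The integrands $c(s)\varepsilon^{\top}(\eta_{N,i}\otimes\sigma_{ji}|A_{ij}(s)|\boldsymbol{1}_{d})$ and $c(s)\varepsilon^{\top}(\eta_{N,i}\otimes|A_{ij}(s)|f_{ji}(\cdot))$ are bounded on $[t_{0},t\wedge\tau_{n}]$. This follows from Assumption \ref{Lipschitz Assumption}, Assumption \ref{bound assumption}, and the continuity and hence local boundedness of $c$. So $M$ is a continuous local martingale with $M(t_{0})=0$. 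Lemma \ref{sc lemma} then yields $\lim_{t\to\infty}V(t)<\infty$ and $\lim_{t\to\infty}Y_{2}(t)<\infty$ almost surely.

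The main obstacle is identifying the almost sure limit of $V(t)$ as $0$. We plan two routes.
\begin{itemize}
\item \emph{Route 1.} Theorem \ref{switching ms theorem} applies under the present hypotheses and gives $\mathbb{E}V(t)\to0$. So $V(t)\to0$ in probability, and along some sequence $t_{k}\to\infty$ we have $V(t_{k})\to0$ almost surely. Since $\lim_{t\to\infty}V(t)$ exists almost surely, that limit must be $0$.
\item \emph{Route 2 (more self-contained).} From $\int_{t_{0}}^{\infty}\mu_{2}(s)V(s)ds\leq\lim Y_{2}<\infty$ almost surely, together with $\int^{\infty}\mu_{2}=\infty$ (which holds because $\mu_{2}\geq\tfrac{\Lambda_{2}}{2}c$ eventually and $\int c=\infty$), we get $\liminf V(t)=0$. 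Combined with the existence of the limit, this gives $V(t)\to0$ almost surely.
\end{itemize}

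Either route gives $\Vert\varepsilon(t)\Vert\to0$ almost surely, i.e. $x_{i}(t)\to\boldsymbol{\theta}$ almost surely for every $i\in\underline{N}$, which is \eqref{AS definition}.
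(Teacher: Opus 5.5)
Your proposal follows the paper's proof essentially step for step. It uses the same decomposition $\Vert\varepsilon(t)\Vert^{2}=\zeta+Y_{1}(t)-Y_{2}(t)+M(t)$ and the same localization for $M$, and your Route 1 is exactly how the paper identifies the limit, via Theorem \ref{switching ms theorem} and an almost surely convergent subsequence. Route 2 is a valid, more self-contained alternative that replaces the mean-square result by $\int^{\infty}\mu_{2}=\infty$ together with $\lim Y_{2}<\infty$.

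One caveat you share with the paper: the constant $\Lambda_{2}$ needs $\lambda_{\min}\bigl(L_{B}(t)+L_{B}^{\top}(t)\bigr)$ to be bounded away from $0$ uniformly in $t$, and this does not follow from pointwise positive definiteness plus Assumption \ref{bound assumption}. Scaling a fixed admissible topology by a factor $\epsilon(t)\to0$ keeps every stated hypothesis. If moreover $\int_{0}^{\infty}c(t)\epsilon(t)dt<\infty$, the conclusion fails. So the ``uniform positive definiteness'' you invoke has to be taken as an additional assumption rather than a consequence of the stated one.
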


{\it Proof:}
\eqref{varepsilon in th3} gives
\begin{equation}\notag
\begin{aligned}
\Vert\varepsilon(t)\Vert^{2}=&\Vert\varepsilon(t_{0})\Vert^{2}-\int_{t_0}^{t}c(s)\varepsilon^{\top}(s)\left(L_{B}(s)+L_{B}^{\top}(s)\right)\varepsilon(s)ds+\int_{t_0}^{t}dM_{2}(s)\\
&+\int_{t_0}^{t}c^{2}(s)\cdot\sum_{i,j=1}^{N}\sigma_{ji}^{2}\left[\sum_{l=1}^{d}\left(\sum_{k=1}^{d}a_{lk}^{|ij|}(s)\right)^{2}\right]ds\\
&+\int_{t_0}^{t}c^{2}(s)\sum_{i,j=1}^{N}f_{ji}^{\top}\left(x_{j}(s)-x_{i}(s)\right)|A_{ij}(s)|^{2}f_{ji}\left(x_{j}(s)-x_{i}(s)\right)ds.
\end{aligned}
\end{equation}
Denote
\begin{equation}\notag
\begin{aligned}
Y_{1}(t)=&\int_{t_0}^{t}c^{2}(s)\cdot\sum_{i,j=1}^{N}\sigma_{ji}^{2}\left[\sum_{l=1}^{d}\left(\sum_{k=1}^{d}a_{lk}^{|ij|}(s)\right)^{2}\right]ds,\\
Y_{2}(t)=&-\int_{t_0}^{t}c^{2}(s)\sum_{i,j=1}^{N}f_{ji}^{\top}\left(x_{j}(s)-x_{i}(s)\right)|A_{ij}(s)|^{2}f_{ji}\left(x_{j}(s)-x_{i}(s)\right)ds\\
&+\int_{t_0}^{t}c(s)\varepsilon^{\top}(s)\left(L_{B}(s)+L_{B}^{\top}(s)\right)\varepsilon(s)ds,\\
M(t)=&\int_{t_0}^{t}dM_{2}(s)\\
=&2\int_{t_0}^{t}c(s)\varepsilon^{\top}(s)\sum_{i,j=1}^{N}\left(\eta_{N,i}\otimes\sigma_{ji}|A_{ij}(s)|\boldsymbol{1}_{d}\right)dw_{1ji}(s)\\
&+2\int_{t_0}^{t}c(s)\varepsilon^{\top}(s)\sum_{i,j=1}^{N}\left[\eta_{N,i}\otimes|A_{ij}(s)|f_{ji}\left(x_{j}(s)-x_{i}(s)\right)\right]dw_{2ji}(s)
\end{aligned}
\end{equation}
and
$\zeta=\Vert\varepsilon(t_{0})\Vert^{2}$. It can be verified that $Y_{1}(t)$ is an increasing process with $Y_{1}(t_0)=0,\ a.s.$, and $\lim\limits_{t\rightarrow+\infty}Y_{1}(t)\leq\bar{p}\int_{t_0}^{\infty}c^{2}(s)ds<\infty,\ a.s.$, and
\begin{align}\notag
Y_{2}(t)\geq\int_{t_0}^{t}c(s)\left[\Lambda_{2}-2\bar{\sigma}\Lambda_{1}\lambda_{max}(S_{N})c(s)\right]\Vert\varepsilon(s)\Vert^{2}ds=\int_{t_0}^{t}\mu_{2}(s)\Vert\varepsilon(s)\Vert^{2}ds,
\end{align}
in which $\mu_{2}(t)>0,\ \forall t\geq t_{0}$. Therefore, $Y_{2}(t)$ is an increasing process with $Y_{2}(t_0)=0,\ a.s.$ Denote
\begin{align}\notag
H_{1ji}(s)=c(s)\varepsilon^{\top}(s)\left(\eta_{N,i}\otimes\sigma_{ji}|A_{ij}(s)|\boldsymbol{1}_{d}\right), 
\end{align} 
and
\begin{align}\notag
H_{2ji}(s)=c(s)\varepsilon^{\top}(s)\left[\eta_{N,i}\otimes|A_{ij}(s)|f_{ji}\left(x_{j}(s)-x_{i}(s)\right)\right].
\end{align}
Take stopping times
\begin{align}\label{stopping time in th4}
\tau_{n}=inf\{\Vert\varepsilon(t)\Vert\geq n\ \text{or}\ \int_{0}^{t}c^{2}(s)ds\geq n\},	
\end{align}
then by continuity one has that $\lim\limits_{n\rightarrow+\infty}\tau_{n}=+\infty$. Further, by Assumption \ref{Lipschitz Assumption}, Assumption \ref{bound assumption} and \eqref{stopping time in th4},
\begin{equation}\notag
\begin{aligned}
&\mathbb{E}\int_{0}^{t\wedge\tau_{n}}H^{2}_{1ji}(s)ds\\
=&\mathbb{E}\int_{0}^{t\wedge\tau_{n}}c^{2}(s)\varepsilon^{\top}(s)\left(\eta_{N,i}\otimes\sigma_{ji}|A_{ij}(s)|\boldsymbol{1}_{d}\right)\left(\eta_{N,i}\otimes\sigma_{ji}|A_{ij}(s)|\boldsymbol{1}_{d}\right)^{\top}\varepsilon(s)ds\\
\triangleq&\mathbb{E}\int_{0}^{t\wedge\tau_{n}}c^{2}(s)\varepsilon^{\top}(s)\mathcal{A}_{1ji}(s)\varepsilon(s)ds\\
\leq&\mathbb{E}\int_{0}^{t\wedge\tau_{n}}\lambda_{max}(\mathcal{A}_{1ji}(s))\cdot c^{2}(s)\Vert\varepsilon(s)\Vert^{2}ds<\infty,
\end{aligned}
\end{equation}
and
\begin{equation}\notag
\begin{aligned}
&\mathbb{E}\int_{0}^{t\wedge\tau_{n}}H^{2}_{2ji}(s)ds\\
=&\mathbb{E}\int_{0}^{t\wedge\tau_{n}}c^{2}(s)\varepsilon^{\top}(s)\left(\eta_{N,i}\otimes|A_{ij}(s)|f_{ji}\left(x_{j}(s)-x_{i}(s)\right)\right)\left(\eta_{N,i}\otimes|A_{ij}(s)|f_{ji}\left(x_{j}(s)-x_{i}(s)\right)\right)^{\top}\varepsilon(s)ds\\
=&\mathbb{E}\int_{0}^{t\wedge\tau_{n}}c^{2}(s)\varepsilon_{i}^{\top}(s)|A_{ij}(s)|\cdot f_{ji}\left(x_{j}(s)-x_{i}(s)\right)f_{ji}^{\top}\left(x_{j}(s)-x_{i}(s)\right)|A_{ij}(s)|\varepsilon_{i}(s)ds\\
=&\mathbb{E}\int_{0}^{t\wedge\tau_{n}}c^{2}(s)\varepsilon_{i}^{\top}(s)|A_{ij}(s)|\cdot f_{ji}\left(\varepsilon_{j}(s)-\varepsilon_{i}(s)\right)f_{ji}^{\top}\left(\varepsilon_{j}(s)-\varepsilon_{i}(s)\right)|A_{ij}(s)|\varepsilon_{i}(s)ds\\
\leq&\mathbb{E}\int_{0}^{t\wedge\tau_{n}}\lambda_{max}(\mathcal{A}_{2ji}(s))\cdot c^{2}(s)\Vert\varepsilon_{i}(s)\Vert^{2}ds<\infty.
\end{aligned}
\end{equation}
in which $\mathcal{A}_{1ji}(s)=\left(\eta_{N,i}\otimes\sigma_{ji}|A_{ij}(s)|\boldsymbol{1}_{d}\right)\left(\eta_{N,i}\otimes\sigma_{ji}|A_{ij}(s)|\boldsymbol{1}_{d}\right)^{\top}$ and $\mathcal{A}_{2ji}(s)=|A_{ij}(s)|\cdot f_{ji}\left(\varepsilon_{j}(s)-\varepsilon_{i}(s)\right)f_{ji}^{\top}\left(\varepsilon_{j}(s)-\varepsilon_{i}(s)\right)|A_{ij}(s)|$, whose eigenvalues are bounded by their continuous dependence on matrix entries. Therefore, $M(t)$ is real-valued local martingale with $M(0)=0,\ a.s.$ Based on the semimartingale convergence theorem, following similar steps in the proof of Theorem \ref{AS theorem} leads to the conclusion
\begin{align}\notag
\lim\limits_{t\rightarrow+\infty}x_{i}(t)-\boldsymbol{\theta}=0,\ a.s.,\ i\in\underline{N}.
\end{align}
$\hfill\blacksquare$

\begin{figure}[h]
	\begin{center}
		\includegraphics[width=0.75\linewidth]{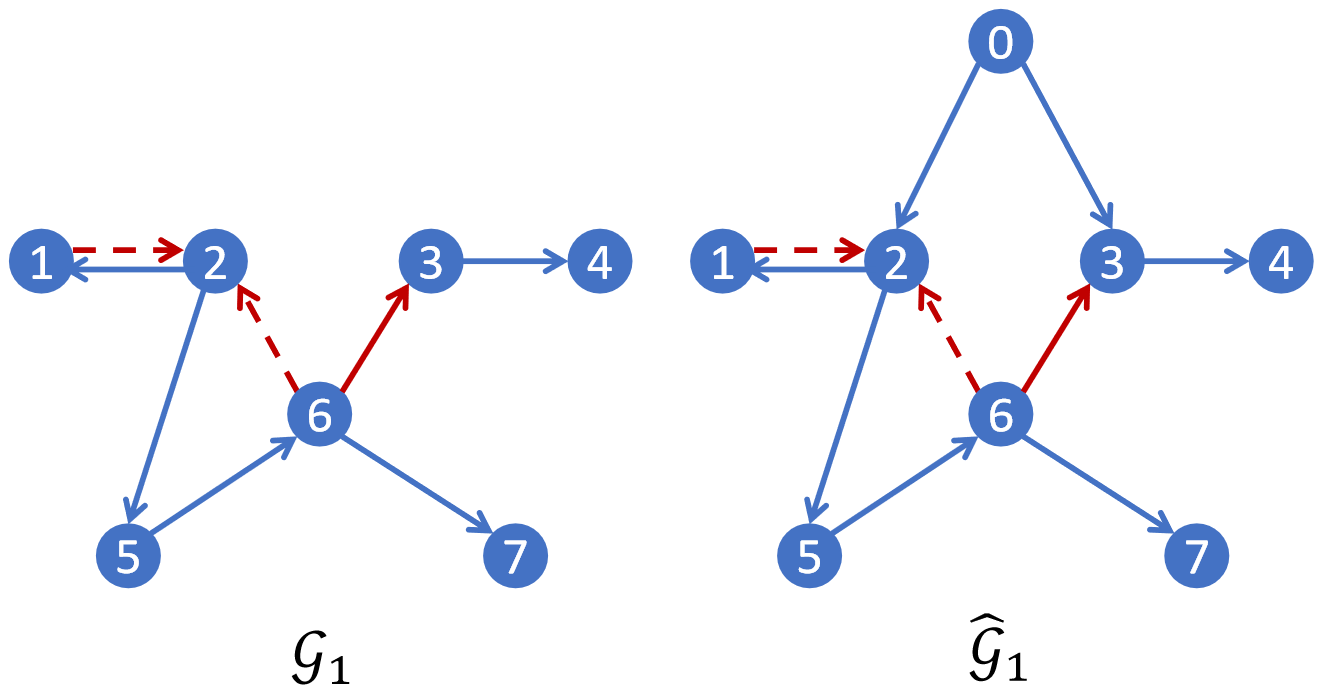}    
		\caption{Directed topology $\mathcal{G}_{1}$ for original FAN \eqref{FAN_noise} and $\widehat{\mathcal{G}}_{1}$ for the corresponding SAN \eqref{SAN_noise} under Lemma \ref{Basic Lemma}. The blue and red solid (dashed) lines represent positive and negative (semi-) definite edges, respectively.}
		\label{G1}                        
	\end{center}
\end{figure}

\begin{figure}[h]
	\begin{center}
		\includegraphics[width=0.90\linewidth]{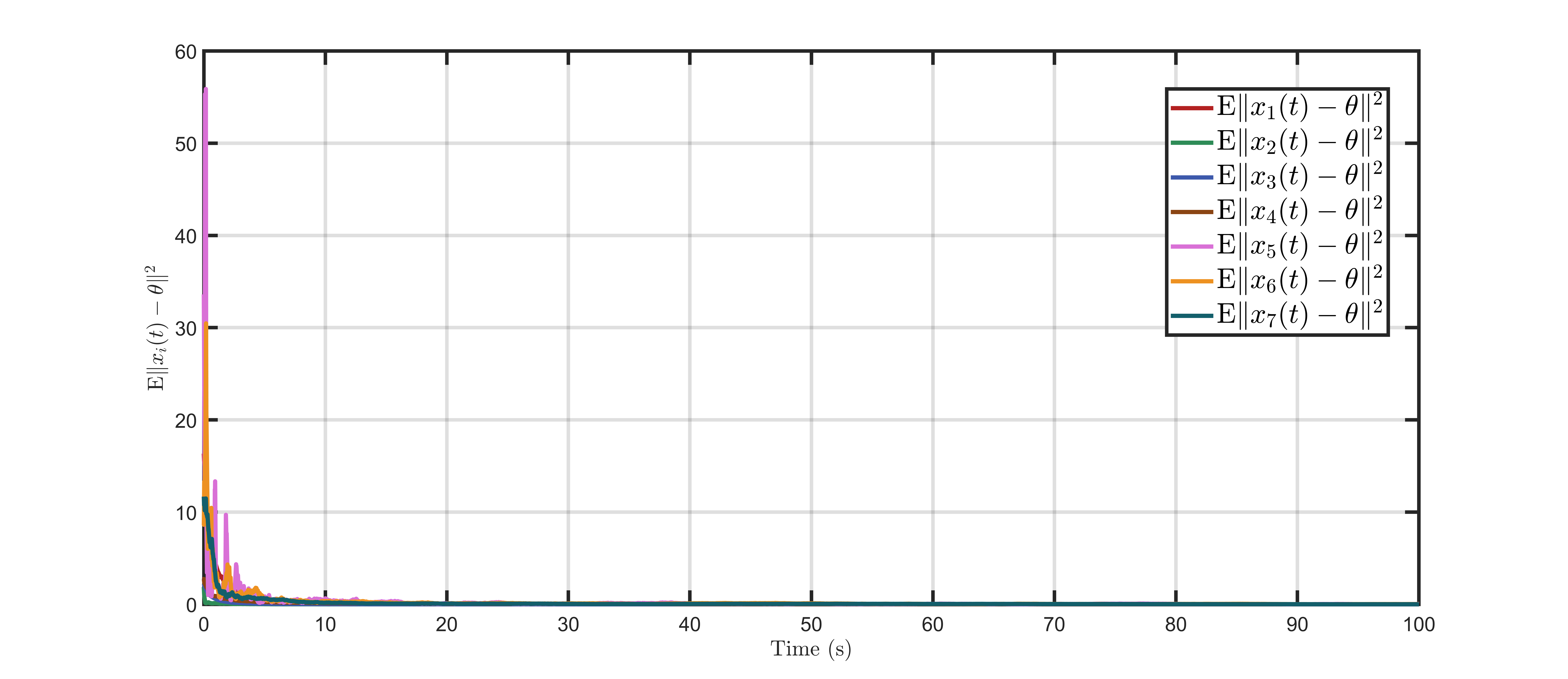}    
		\caption{Mean square non-trivial consensus errors of directed signed matrix-weighted network with fixed topology $\mathcal{G}_{1},\ \widehat{\mathcal{G}}_{1}$ in Figure \ref{G1}.}
		\label{fixed_E}                          
	\end{center}
\end{figure}

\section{Numerical examples}\label{Numerical examples}
\subsection{fixed topology}\label{simulation_fixed subsection}

First, Simulations under original signed matrix-weighted FAN \eqref{FAN_noise} with fixed topology $\mathcal{G}(t)\equiv\mathcal{G}_{1}$ in Figure \ref{G1} are presented. Specifically, the edge matrix weights are
\begin{small}
\begin{equation}\label{G2_weights}\notag
\begin{aligned}
&A_{12}(\mathcal{G}_{1})=\begin{bmatrix}1&0&0\\0&1&0\\0&0&2\end{bmatrix}\succ\boldsymbol{0},\ 
A_{21}(\mathcal{G}_{1})=-\begin{bmatrix}1&0&0\\0&1&0\\0&0&0\end{bmatrix}\preceq\boldsymbol{0},\\
&A_{65}(\mathcal{G}_{1})=\begin{bmatrix}7&1&1\\1&6&0\\1&0&5\end{bmatrix}\succ\boldsymbol{0},\ 
A_{52}(\mathcal{G}_{1})=\begin{bmatrix}10&1&2\\1&8&1\\2&1&12\end{bmatrix}\succ\boldsymbol{0},\\
&A_{76}(\mathcal{G}_{1})=\begin{bmatrix}3.1&0&0\\0&3.1&0\\0&0&3.2\end{bmatrix}\succ\boldsymbol{0},\ 
A_{26}(\mathcal{G}_{1})=-\begin{bmatrix}0&0&0\\0&0&0\\0&0&1\end{bmatrix}\preceq\boldsymbol{0},\\
\end{aligned}
\end{equation}
\end{small}
\noindent and $A_{36}(\mathcal{G}_{1})=-0.1\cdot A_{12}(\mathcal{G}_{1}),\ A_{43}(\mathcal{G}_{1})=A_{12}(\mathcal{G}_{1})$. $\mathcal{G}_{1}$ satisfies Assumption \ref{directed Assumption} with vertices decomposition $\mathcal{V}(\mathcal{G}_{1})=\mathcal{V}_{1}(\mathcal{G}_{1})\cup\mathcal{V}_{2}(\mathcal{G}_{1}),\ \mathcal{V}_{1}(\mathcal{G}_{1})=\{v_{2},v_{3}\},\ \mathcal{V}_{2}(\mathcal{G}_{1})=\{v_{1},v_{4},v_{5},v_{6},v_{7}\}$. Take the desired non-trivial consensus state $\boldsymbol{\theta}=[1,2,-1]^{\top}$. By Lemma \ref{Basic Lemma}, the lower bound $C$ of the coupling coefficient $\delta$ is calculated as $C=7.1440$, and we take $\delta=C+0.1=7.2440$, then external signal $x_{0}=(1+2/\delta)\boldsymbol{\theta}=[1.2761,2.5522,-1.2761]^{\top}$. The informed agents set $\mathcal{V}_{\mathcal{I}}=\{v_{2},v_{3}\}$, and the coupling matrix weights are
\begin{align}\notag
B_{2}(\mathcal{G}_{1})=\left|A_{21}(\mathcal{G}_{1})\right|+\left|A_{26}(\mathcal{G}_{1})\right|,\ B_{3}(\mathcal{G}_{1})=\left|A_{36}(\mathcal{G}_{1})\right|.
\end{align}
The above control strategy is reflected in $\widehat{\mathcal{G}}_{1}$ in Figure \ref{G1}, in which vertex $0$ represents external signal $x_{0}$, and edges $(v_{0},v_{2}),\ (v_{0},v_{3})$ stand for couplings from $x_{0}$ to the informed agents $x_{2}$ and $x_{3}$. Following the above settings, based on Lemma \ref{Basic Lemma}, $-L_{B}(\mathcal{G}_{1})$ is Hurwitz. Set $\sigma_{ji}=0.4$ and $f_{ji}(x)=0.3x$ for $i,j\in\underline{N}$, take control gain $c(t)=\dfrac{1}{1+t}$, then condition \eqref{ct in th1} holds, and Theorem \ref{ms theorem} gurantees the mean square non-trivial consensus. To simulate such behavior, we consider the mean square non-trivial consensus errors $\{\mathbb{E}\Vert x_{i}(t)-\boldsymbol{\theta}\Vert^{2}\}_{i\in\underline{7}}$. By applying the Euler–Maruyama method with a step size of $0.001$, the stochastic differential equations (SDEs) are solved. We generate $1\times10^{3}$ sample paths and take the mean square average, then we obtain Figure \ref{fixed_E}, which shows that the agents achieve mean square non-trivial consensus.

Further, inheriting the above settings, in which control gain $c(t)=\dfrac{1}{1+t}$ satisfies condition \eqref{ct in th2}, Theorem \ref{AS theorem} gives the almost sure non-trivial consensus result, which is depicted by the states evolution in Figure \ref{fixed_AS_simulations}.

\subsection{time-varying topology}\label{simulation_varying}
Next, numerical examples of time-varying topology case are presented. To facilitate simulations, we take the topology $\mathcal{G}(t)$ to be time-varying in a switching pattern: set the time sequence $\{t_{k}\}_{k\in\mathbb{N}}$ as $t_{0}=0$, and $\Delta t_{k}=0.02$ for any $k\in\mathbb{N}$. The switching pattern is for any $l\in\mathbb{N}$, $\mathcal{G}(t)=\mathcal{G}_{2},\ t\in[t_{5l},t_{5l+2})$, $\mathcal{G}(t)=\mathcal{G}_{1},\ t\in[t_{5l+2},t_{5l+3})$, $\mathcal{G}(t)=\mathcal{G}_{3},\ t\in[t_{5l+3},t_{5(l+1)})$, in which topologies $\mathcal{G}_{2},\ \mathcal{G}_{3}$ come from Figure \ref{switching G2 and G3}. The edge matrix weights are given in \eqref{G2_weights} and \eqref{G3_weights}.

\begin{small}
\begin{equation}\label{G2_weights}
\begin{aligned}
&A_{12}(\mathcal{G}_{2})=-\begin{bmatrix}5&2&1\\2&4&1\\1&1&3\end{bmatrix}\prec\boldsymbol{0},\ 
A_{47}(\mathcal{G}_{2})=-\begin{bmatrix}0&0&0\\0&1&0\\0&0&0\end{bmatrix}\preceq\boldsymbol{0},\\
&A_{65}(\mathcal{G}_{2})=\begin{bmatrix}6&1&-1\\1&8&2\\-1&2&6\end{bmatrix}\succ\boldsymbol{0},\ 
A_{67}(\mathcal{G}_{2})=\begin{bmatrix}4&0&0\\0&3&0\\0&0&3\end{bmatrix}\succ\boldsymbol{0},\\ 
&A_{76}(\mathcal{G}_{2})=\begin{bmatrix}5&1&0\\1&6&-1\\0&-1&5\end{bmatrix}\succ\boldsymbol{0},\ 
A_{43}(\mathcal{G}_{2})=-\begin{bmatrix}2&0&1\\0&0&0\\1&0&3\end{bmatrix}\preceq\boldsymbol{0},
\end{aligned}
\end{equation}
\begin{equation}\label{G3_weights}
\begin{aligned}
&A_{34}(\mathcal{G}_{3})=\begin{bmatrix}1&0&0\\0&2&0\\0&0&0\end{bmatrix}\succeq\boldsymbol{0},
A_{37}(\mathcal{G}_{3})=-\begin{bmatrix}2&-1&0\\-1&2&0\\0&0&1\end{bmatrix}\prec\boldsymbol{0},\\ 
&A_{43}(\mathcal{G}_{3})=\begin{bmatrix}2&-1&0\\-1&3&1\\0&1&4\end{bmatrix}\succ\boldsymbol{0},\ 
A_{75}(\mathcal{G}_{3})=\begin{bmatrix}2&0&0\\0&3&0\\0&0&1\end{bmatrix}\succ\boldsymbol{0},\\
&A_{52}(\mathcal{G}_{3})=\begin{bmatrix}8&2&1\\2&8&1\\1&1&6\end{bmatrix}\succ\boldsymbol{0},\ 
A_{21}(\mathcal{G}_{3})=-\begin{bmatrix}3&1&-1\\1&5&2\\-1&2&5\end{bmatrix}\prec\boldsymbol{0},
\end{aligned}
\end{equation}
\end{small}
\noindent and $A_{51}(\mathcal{G}_{2})=A_{65}(\mathcal{G}_{2}),\ A_{26}(\mathcal{G}_{2})=A_{36}(\mathcal{G}_{2})=-0.5I_{3},\ A_{62}(\mathcal{G}_{2})=A_{47}(\mathcal{G}_{2}),\ A_{63}(\mathcal{G}_{3})=A_{15}(\mathcal{G}_{3})=-I_{3}\prec\boldsymbol{0}$.

Under Lemma \ref{Basic Lemma}'s technique, for the desired non-trivial consensus state $\theta=[1,2,-1]^{\top}$, related parameters are set as follow:\\
for $t\in[t_{5l},t_{5l+2})$, let
\begin{equation}\notag
\begin{aligned}
&\delta(t)=7.0495,\ x_{0}(t)=[1.2837, 2.5674, -1.2837]^{\top},\\ &B_{1}(t)=|A_{12}(\mathcal{G}_{2})|\succ\boldsymbol{0},\ 
B_{2}(t)=|A_{26}(\mathcal{G}_{2})|\succ\boldsymbol{0},\ 
B_{3}(t)=|A_{36}(\mathcal{G}_{2})|\succ\boldsymbol{0},\\
&B_{4}(t)=|A_{43}(\mathcal{G}_{2})|+|A_{47}(\mathcal{G}_{2})|\succ\boldsymbol{0},\ 
B_{6}(t)=|A_{62}(\mathcal{G}_{2})|\succeq\boldsymbol{0},
\end{aligned}
\end{equation}
for $t\in[t_{5l+3},t_{5(l+1)})$, let
\begin{equation}\notag
\begin{aligned}
&\delta(t)=3.1000,\ x_{0}(t)=[1.6452, 3.2903, -1.6452]^{\top},\\ &B_{1}(t)=\left|A_{15}(\mathcal{G}_{3})\right|\succ\boldsymbol{0},\ B_{2}(t)=\left|A_{21}(\mathcal{G}_{3})\right|\succ\boldsymbol{0},\\ &B_{3}(t)=\left|A_{37}(\mathcal{G}_{3})\right|\succ\boldsymbol{0},\ 
B_{6}(t)=\left|A_{63}(\mathcal{G}_{3})\right|\succ\boldsymbol{0},
\end{aligned}
\end{equation}
and for $t\in[t_{5l+2},t_{5l+3})$, the parameters are set the same as in Subsection \ref{simulation_fixed subsection}. Same as in Subsection \ref{simulation_fixed subsection}, $\widehat{\mathcal{G}}_{2}$ and $\widehat{\mathcal{G}}_{3}$ in Figure \ref{switching G2 and G3} depict the above control strategy. Following the above settings, it can be verified that matrix $L_{B}(\mathcal{G}_{1})+L_{B}^{\top}(\mathcal{G}_{1}),\ L_{B}(\mathcal{G}_{2})+L_{B}^{\top}(\mathcal{G}_{2})$ and $L_{B}(\mathcal{G}_{3})+L_{B}^{\top}(\mathcal{G}_{3})$ are all positive definite. Set $\sigma_{ji}=0.4$ and $f_{ji}(x)=0.3x$ for $i,j\in\underline{N}$, take control gain $c(t)=\dfrac{1}{1+t}$, then conditions \eqref{ct in th3} holds, and Theorem \ref{switching ms theorem} gurantees the mean square non-trivial consensus. To simulate such behavior, we consider the mean square non-trivial consensus errors $\{\mathbb{E}\Vert x_{i}(t)-\boldsymbol{\theta}\Vert^{2}\}_{i\in\underline{7}}$. By applying the Euler–Maruyama method with a step size of $0.001$, the stochastic differential equations (SDEs) are solved. We generate $1\times10^{3}$ sample paths and take the mean square average, then we obtain Figure \ref{switch_E}, which shows that the agents achieve mean square non-trivial consensus.

Further, with the above settings, $c(t)=\dfrac{1}{1+t}$ satisfies conditions \eqref{ct in th4}, therefore, Theorem \ref{switching AS theorem} gurantees the almost sure non-trivial consensus result, which is depicted by the states evolution in Figure \ref{switch_AS_simulations}.

Remark \ref{ct Remark} under Theorem \ref{AS theorem} discusses the idea for designing control gain $c(t)$ in our study. The significance of these design conditions can also be reflected by simulation results in Figure \ref{ct=(1+t)^-2} and Figure \ref{ct=(1+t)^-0.33}, in which we choose $c(t)=\dfrac{1}{(1+t)^2}$ and $c(t)=\dfrac{1}{(1+t)^\frac{1}{3}}$, respectively that do not satisfy conditions \eqref{ct in th4}, and keep other settings remain unchanged. It is shown that, the almost sure non-trivial consensus result can not be guranteed, due to the vanishing cumulative effect or infinite energy of noise-induced effects.

\section{Conclusions}\label{Conclusions}
This paper studies non-trivial consensus\textemdash a relatively novel convergence result that has remained largely unexplored\textemdash on directed signed matrix-weighted networks with additive and multiplicative measurement noises. For any prespecified non-trivial consensus value, the proposed protocols guarantee that all agents converge towards this shared non-zero value through the synthesis of the informed agents selection, the coupling items and external signals determination and the control gain design. The protocols require mild connectivity conditions and impose no requirements on structural (un)balance properties. The results are first obtained in the mean square and then in the almost sure senses. Both the fixed and the time-varying topology cases are investigated. The numerical simulations indicate the effectiveness and correctness of the theoretical results. Outlook for future work includes fixed-time or prescribed-time non-trivial consensus control on MASs with actuator faults or input saturation.

\section*{Acknowledgments}
This work was supported in part by the Major Research Plan of the National Natural Science Foundation of China (Grant No. 92267101, 62573301, 62503335), and in part by the Startup Grant of Shenzhen University.



\begin{thebibliography}{00}


\bibitem{consensus problems}
	R. Olfati-Saber, R. M. Murray, Consensus problems in networks of agents with switching topology and time-delays, IEEE Trans. Autom. Control 49 (9) (2004) 1520-1533.

	
\bibitem{consensus_IS_2023}
	J. Lv, C. Wang, B. Liu, Y. Kao, Y. Jiang, 
	Fully distributed prescribed-time consensus control of multiagent systems under fixed and switching topologies, Inf. Sci. 648 (2023) 119538.

\bibitem{SDE_Mao}
	X. Mao, Stochastic differential equations and applications, 2nd ed.
	Chichester, U.K.: Horwood Publishing (2007).


\bibitem{noises_SIAM2018}
	X. Zong, T. Li, J.-F. Zhang, Consensus conditions of continuous-time multi-agent systems with additive and multiplicative measurement noises, SIAM J. Control Optimiz. 56 (1) (2018) 19-52.
	
\bibitem{noises_delay_Auto2019}	
	X. Zong, T. Li, J.-F. Zhang, Consensus conditions of continuous-time multi-agent systems with time-delays and measurement noises, Automatica 99 (2019) 412–419.
	
\bibitem{noises_Markov_TAC2019}
	A. Jadbabaie, A. Olshevsky, Scaling laws for consensus protocols subject to noise, IEEE Trans. Autom. Control 64 (4) (2019) 1389-1402.
	
\bibitem{MS_IF2023}
	F. Sun, C. Lu, W. Zhu, J. Kurths, Data-sampled mean-square consensus of hybrid multi-agent systems with time-varying delay and multiplicative noises, Inf. Sci. 624 (2023) 674-685.	

	
\bibitem{fault_noises_IF2023}
	C. Liu, J. Zhao, B. Jiang, R. J. Patton, Fault-tolerant consensus control of multi-agent systems under actuator/sensor faults and channel noises: A distributed anti-attack strategy, Inf. Sci. 623 (2023) 1–19.

\bibitem{noises_etc_Auto2026}
	R. Jia, Y.-H. Ni, G. Wang, Event-triggered global and group consensus for linear multiagent systems with multiplicative noises, Automatica 183 (2026) 112595.	
	

\bibitem{antagonistic}
	C. Altafini, Consensus problems on networks with antagonistic interactions, IEEE Trans. Autom. Control 58 (4) (2013) 935-946.
	
\bibitem{opinion separation}
	W. Xia, M. Cao, K. H. Johansson, Structural balance and opinion separation in trust–mistrust social networks, IEEE Trans. Control Netw. Syst. 3 (1) (2016) 46-56.
	
\bibitem{IB consensus}
	D. Meng, M. Du, Y. Jia, Interval bipartite consensus of networked agents associated with signed digraphs, IEEE Trans. Autom. Control 61 (12) (2016) 3755-3770.
	
\bibitem{bipartite_IF2024}
	W. Zhang, Q. Huang, A. Alhudhaif, Event-triggered fixed-time bipartite consensus for nonlinear disturbed multi-agent systems with leader-follower and leaderless controller, Inf. Sci. 662 (2024) 120243.
	
\bibitem{noises_signed_2019}
	J. Hu, Y. Wu, T. Li, B. K. Ghosh, Consensus control of general linear multiagent systems with antagonistic interactions and communication noises, IEEE Trans. Autom. Control 64 (5) (2019) 2122-2127.	
	
\bibitem{noises_private_TAC2024}
	J. Wang, J. Ke, J.-F. Zhang, Differentially private bipartite consensus over signed networks with time-varying noises, IEEE Trans. Autom. Control 69 (9) (2024) 5788-5803.


\bibitem{small oscillations}
	S. E. Tuna, Synchronization of small oscillations, Automatica 107 (2019) 154-161.


\bibitem{LC}
	S. E. Tuna, Synchronization under matrix-weighted Laplacian, Automatica 73 (2016) 76-81.


\bibitem{network science}
	N. E. Friedkin, A. V. Proskurnikov, R. Tempo, S. E. Parsegov, Network science on belief system dynamics under logic constraints, Science 354 (6310) (2016) 321-326.


\bibitem{multiple interdependent topics}
	M. Ye, M. Trinh, Y.-H. Lim, B. Anderson, H.-S. Ahn, Continuous-time opinion dynamics on multiple interdependent topics, Automatica 115 (2020) 108884.


\bibitem{Luan}
	Y. Luan, X. Wu, J. L\"{u}, Coevolutionary dynamics of multidimensional opinions over coopetitive influence networks, Automatica 177 (2025) 112279.


\bibitem{Graph effective resistance}
	P. Barooah, J. P. Hespanha, Graph effective resistance and distributed control: Spectral properties and applications, Proc. 45th IEEE Conf. Decis. Control (2006) 3479-3485.


\bibitem{Barooah Estimation}
	P. Barooah, J. P. Hespanha, Estimation from relative measurements: Electrical analogy and large graphs, IEEE Trans. Signal Process 56 (6) (2008) 2181-2193.
	

\bibitem{positive spanning tree}
	M. H. Trinh, C. V. Nguyen, Y.-H. Lim, H.-S. Ahn, Matrix-weighted consensus and its applications, Automatica 89 (2018) 415-419.


\bibitem{balancing set}
	C. Wang, L. Pan, H. Shao, D. Li, Y. Xi, Characterizing bipartite consensus on signed matrix-weighted networks via balancing set, Automatica 141 (2022) 110237.


\bibitem{leader-following matrix-weighted consensus}
	M. H. Trinh, M. Ye, H.-S. Ahn, B. D. O. Anderson, Matrix-weighted consensus with leader-following topologies, Proc. IEEE 11th Asian Control Conf. (2017) 1795-1800.


\bibitem{Su TCASII matrix weighted bipartite consensus}
	H. Su, J. Chen, Y. Yang, Z. Rong, The bipartite consensus for multi-agent systems with matrix-weight-based signed network, IEEE Trans. Circuits Syst. II-Express Briefs 67 (10) (2020) 2019-2023.


\bibitem{Su TAC directed matrix weighted consensus}
	H. Su, S. Miao, Consensus on directed matrix-weighted networks, IEEE Trans. Autom. Control 68 (4) (2023) 6711-6726.


\bibitem{Pan TCASII bipartite consensus}
	L. Pan, H. Shao, M. Mesbahi, Y. Xi, D. Li, Bipartite consensus on matrix-valued weighted networks, IEEE Trans. Circuits Syst. II-Express Briefs 66 (8) (2019) 1441-1445.
	
\bibitem{Privacy Preservation_noises_matrix-weight_TCNS2025}
	P. Wang, H. Shao, L. Pan , Member, W. Yan, N. Li, Computationally light privacy preservation of matrix-weighted average consensus, IEEE Trans. Control Netw. Syst. 12 (2) (2025) 1651-1661.

\bibitem{noises_mw_TCNS2020}
	D. R. Foight, M. H. de Badyn, M. Mesbah, Performance and design of consensus on matrix-weighted and time-scaled graphs, IEEE Trans. Control Netw. Syst. 7 (4) (2020) 1812-1822.

\bibitem{non-trivial consensus}
	A. Proskurnikov, A. Matveev, M. Cao, Consensus and polarization in Altafini’s model with bidirectional time-varying network topologies, in Proc. 53rd IEEE Conf. Decis. Control (2014) 2112-2117.


\bibitem{UAS}
	E. Atkins, A. Ollero, A. Tsourdos, Unmanned Aircraft Systems. Hoboken: John Wiley \& Sons Ltd (2016).


\bibitem{Consensus manipulation}
	X. Chen, H. Liang, Y. Zhang, Y. Wu, Consensus manipulation in social network group decision making with value-based opinion evolution, Inf. Sci. 647 (2023) 119441.


\bibitem{Manipulating opinions}
	P. Bolzern, A. Colombo, C. Piccardi, Manipulating opinions in social networks with community structure, IEEE Trans. Netw. Sci. Eng. 11 (1) (2024) 185-196.


\bibitem{manipulation Automatica}
	F. Liu, S. Cui, G. Chen, W. Mei, H. Gao, Modeling, analysis, and manipulation of co-evolution between appraisal dynamics and opinion dynamics, Automatica 167 (2024) 111797.


\bibitem{Valcher}
	M. E. Valcher, P. Misra, On the consensus and bipartite consensus in high-order multi-agent dynamical systems with antagonistic interactions, Syst. Control Lett. 66 (2014) 94-103.


\bibitem{SCIS Non-trivial consensus}
	T. Niu, B. Mao, L. Wang, X. Wu, Non-trivial consensus control on directed signed networks, Sci. China-Inf. Sci. 69 (3) (2025) 132203.
	
	
\bibitem{my_matrix-weighted_NTC}
	T. Niu, B. Mao, X. Wu, T. Huang, Non-trivial consensus on directed matrix-weighted networks with cooperative and antagonistic interactions, https://arxiv.org/abs/2602.11822.
	

\bibitem{switching matrix-weighted consensus}
	L. Pan, H. Shao, M. Mesbahi, Y. Xi, D. Li, Consensus on matrix-weighted switching networks, IEEE Trans. Autom. Control 66 (12) (2021) 5990-5996.


\bibitem{switching matrix-weighted cluster consensus}
	L. Pan, H. Shao, M. Mesbahi, D. Li, Y. Xi, Cluster consensus on matrix-weighted switching networks, Automatica 141 (2022) 110308.
	
\bibitem{noises_signed_TSMC2022}
	Y. Du, Y. Wang, Z. Zuo, Mean square bipartite consensus for multiagent systems with antagonistic information and time-varying topologies, IEEE Trans. Syst. Man Cybern. Syst. 52 (3) (2022) 1744-1754.
	
\bibitem{FAN&SAN}
	H. Shao, L. Pan, M. Mesbahi, Y. Xi, D. Li, Distributed neighbor selection in multiagent networks, IEEE Trans. Autom. Control 68 (11) (2023) 6711-6726.

\bibitem{naive}
	W. Xia, M. Cao, Clustering in diffusively coupled networks, Automatica 47 (2011) 2395-2405.	

	
\bibitem{semimartingale}
	R. S. Lipster, A. N. Shiryayev, Theory of Martingales, Kluwer Acad. Publ., Dordrecht, The Netherlands (1989).
	
	
\end{thebibliography}



\begin{figure}[H]
	\centering
	\begin{minipage}[b]{0.98\linewidth}
		\centering
		\includegraphics[width=\linewidth]{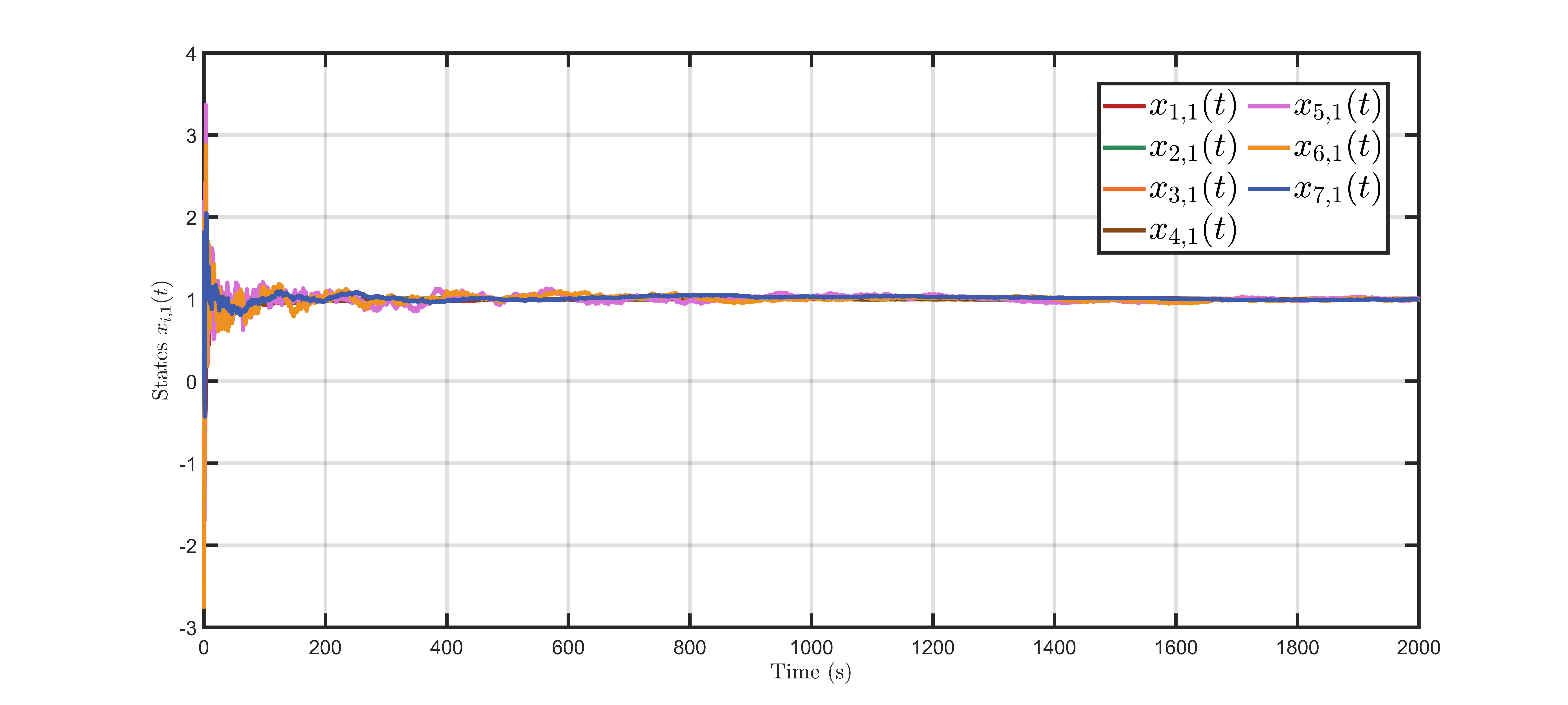}
	\end{minipage}
	\vfill
	\begin{minipage}[b]{0.98\linewidth}
		\centering
		\includegraphics[width=\linewidth]{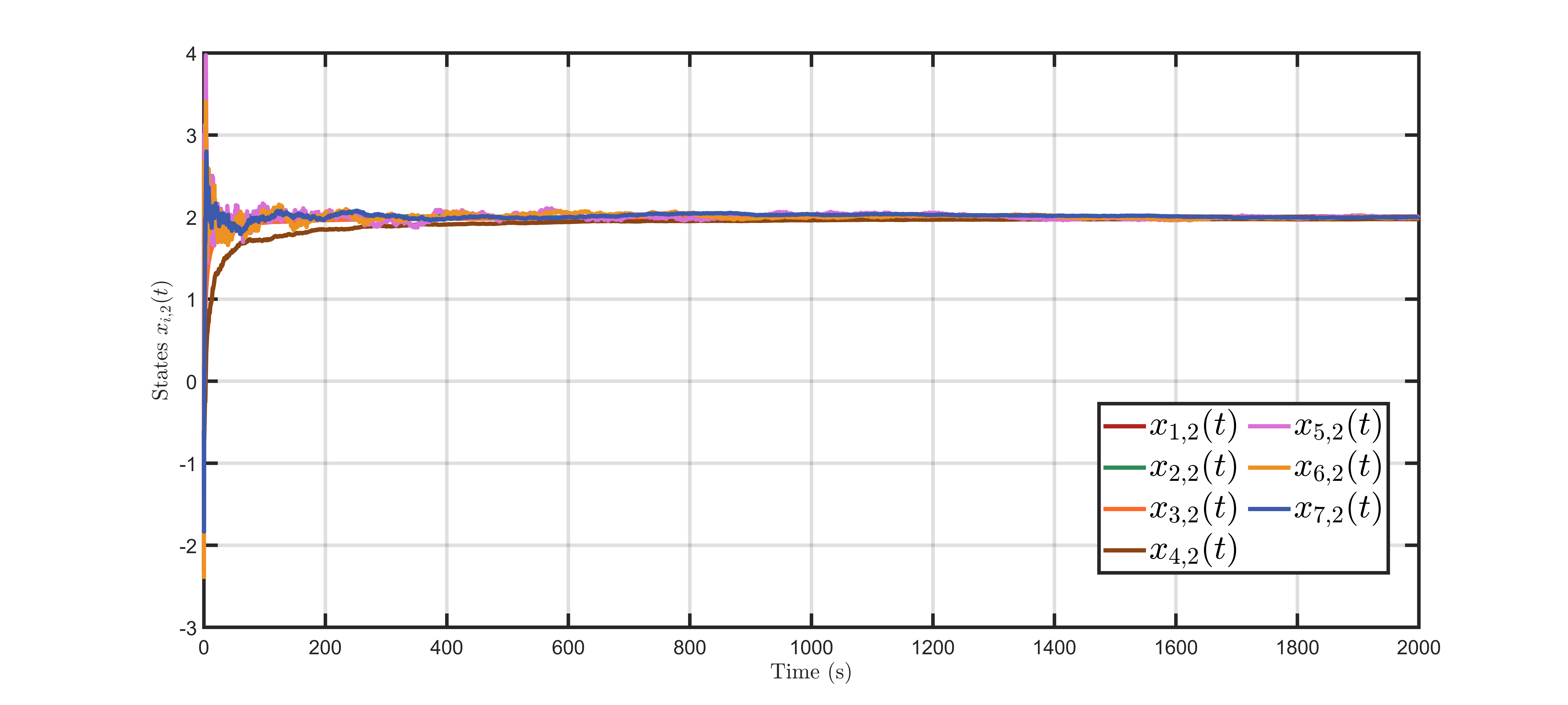}
	\end{minipage}
	\vfill
	\begin{minipage}[b]{0.98\linewidth}
		\centering
		\includegraphics[width=\linewidth]{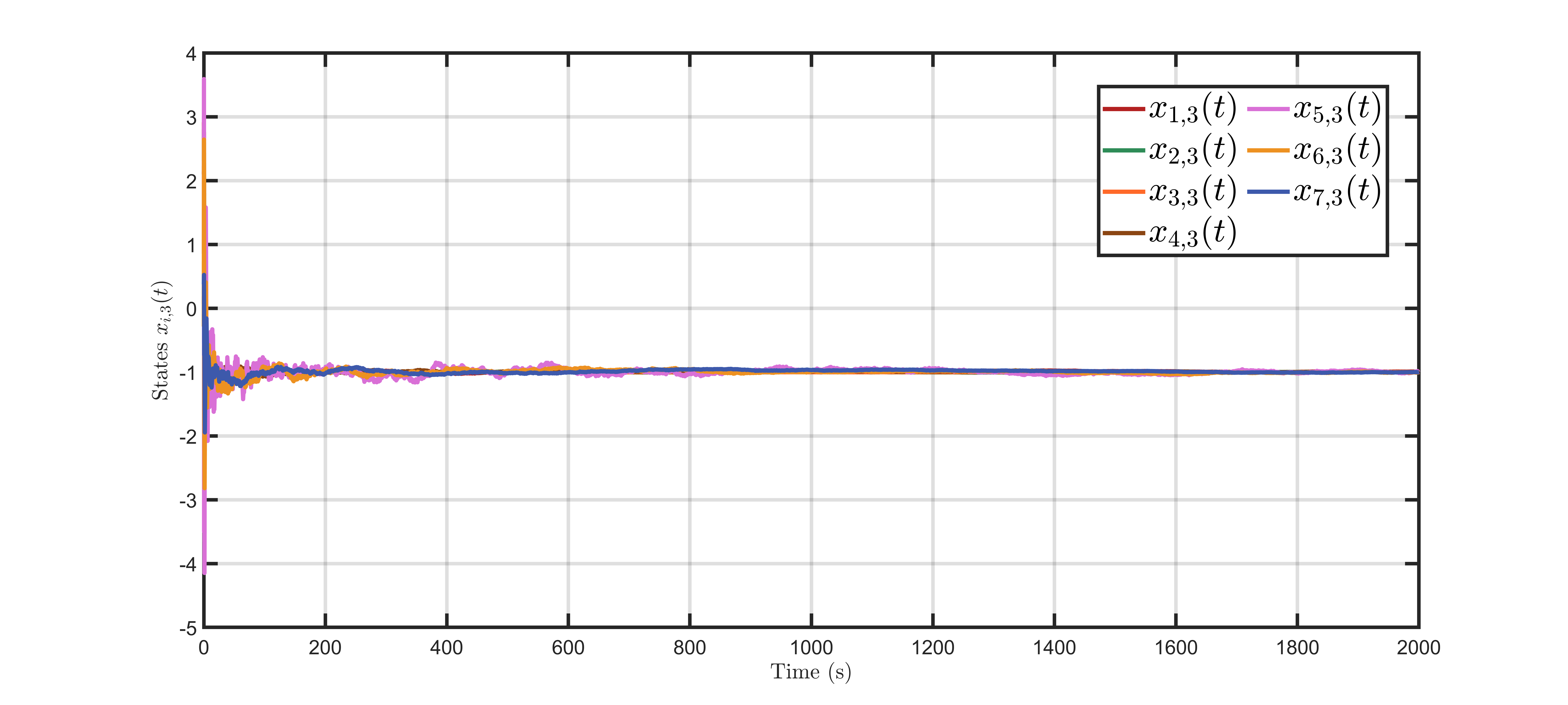}
	\end{minipage}
	\caption{States evolution of directed matrix-weighted network with fixed topology $\widehat{\mathcal{G}}_{1}$ in Figure \ref{G1}.}
	\label{fixed_AS_simulations}
\end{figure}

\begin{figure}[H]
	\centering
	\begin{minipage}[b]{0.75\linewidth}
		\centering
		\includegraphics[width=\linewidth]{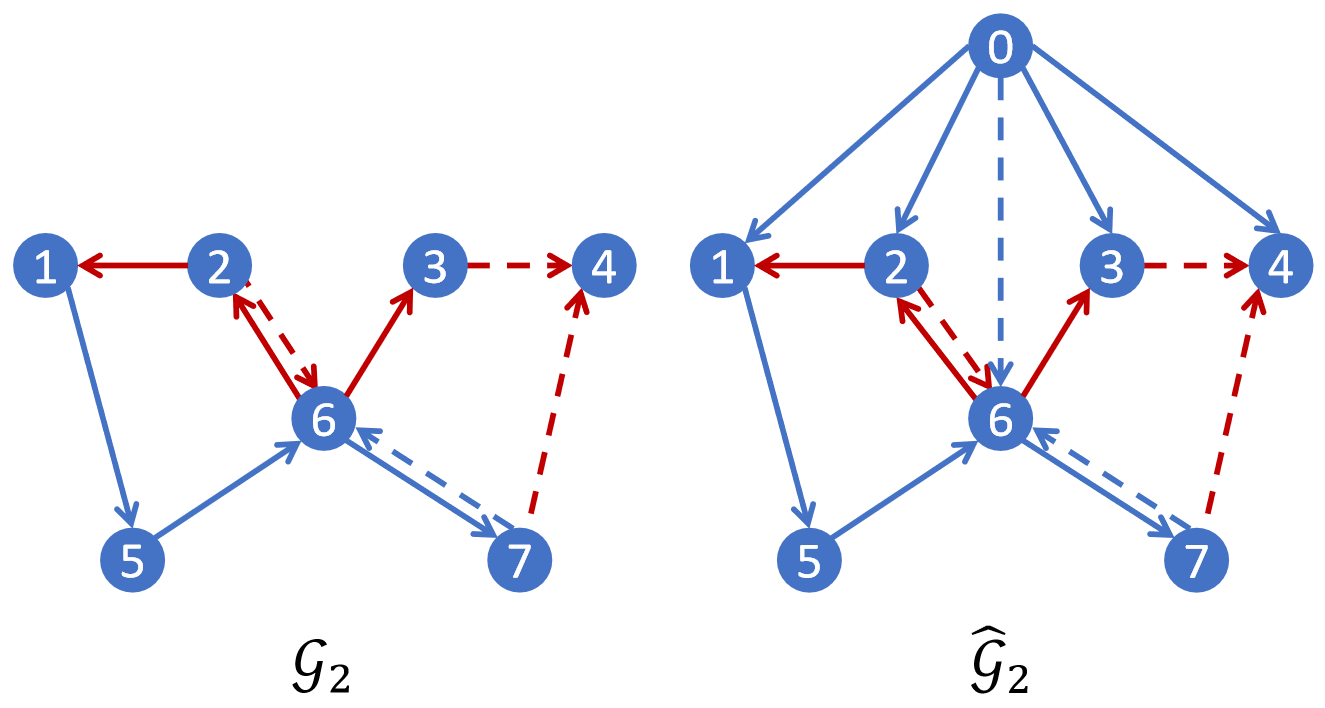}
	\end{minipage}
	\vfill
	\begin{minipage}[b]{0.75\linewidth}
		\centering
		\includegraphics[width=\linewidth]{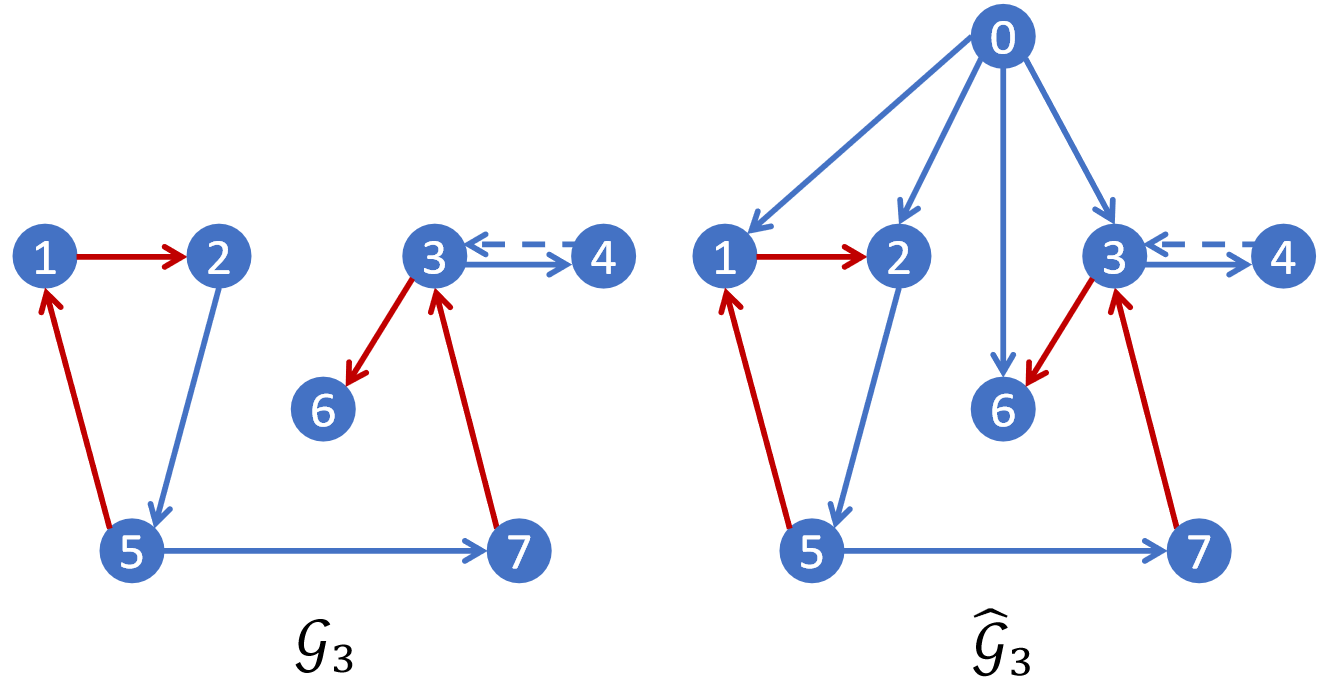}
	\end{minipage}
	\caption{Topology $\mathcal{G}_{2},\ \widehat{\mathcal{G}}_{2}$ and $\mathcal{G}_{3},\ \widehat{\mathcal{G}}_{3}$ for switching network \eqref{FAN_noise} \eqref{SAN_noise} under Lemma \ref{Basic Lemma}. The blue and red solid (dashed) lines represent positive and negative (semi-) definite edges, respectively.}
	\label{switching G2 and G3}
\end{figure}

\begin{figure}[H]
	\begin{center}
		\includegraphics[width=0.90\linewidth]{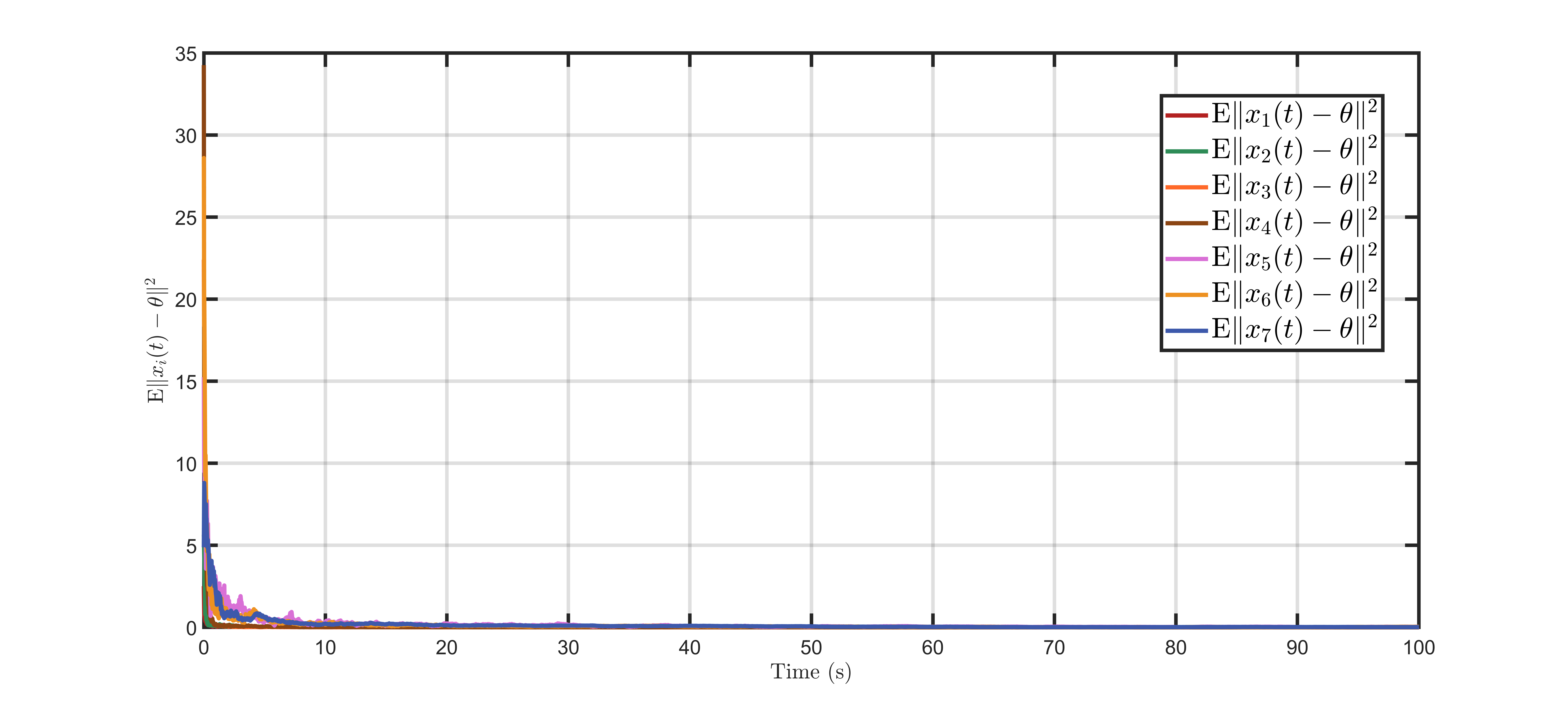}    
		\caption{Mean square non-trivial consensus error of directed matrix-weighted network with time-varying topology $\widehat{\mathcal{G}}(t)$ switching among $\widehat{\mathcal{G}}_{1}$ in Figure \ref{G1} and $\widehat{\mathcal{G}}_{2}$, $\widehat{\mathcal{G}}_{3}$ in Figure \ref{switching G2 and G3}.}
		\label{switch_E}                            
	\end{center}
\end{figure}

\begin{figure}[H]
	\centering
	\begin{minipage}[b]{0.98\linewidth}
		\centering
		\includegraphics[width=\linewidth]{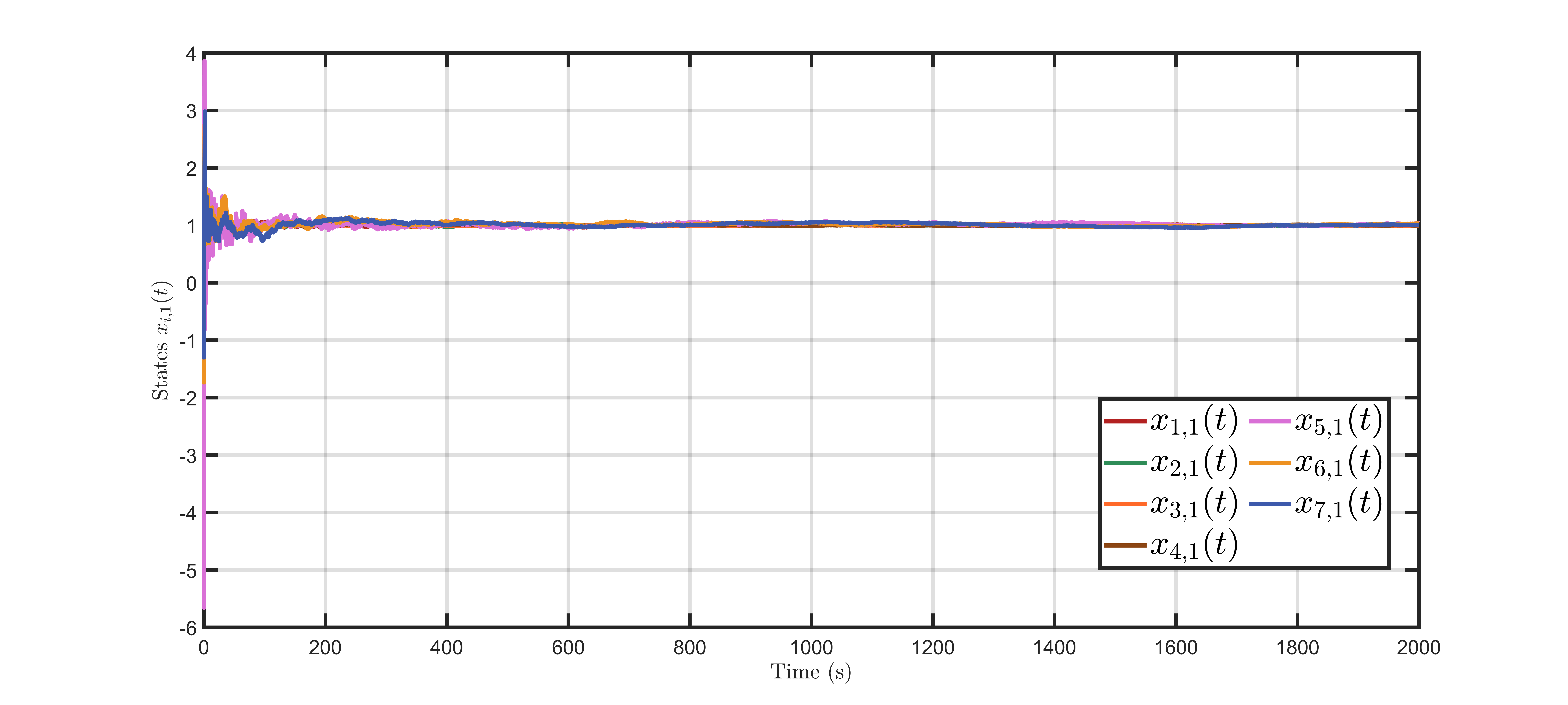}
	\end{minipage}
	\vfill
	\begin{minipage}[b]{0.98\linewidth}
		\centering
		\includegraphics[width=\linewidth]{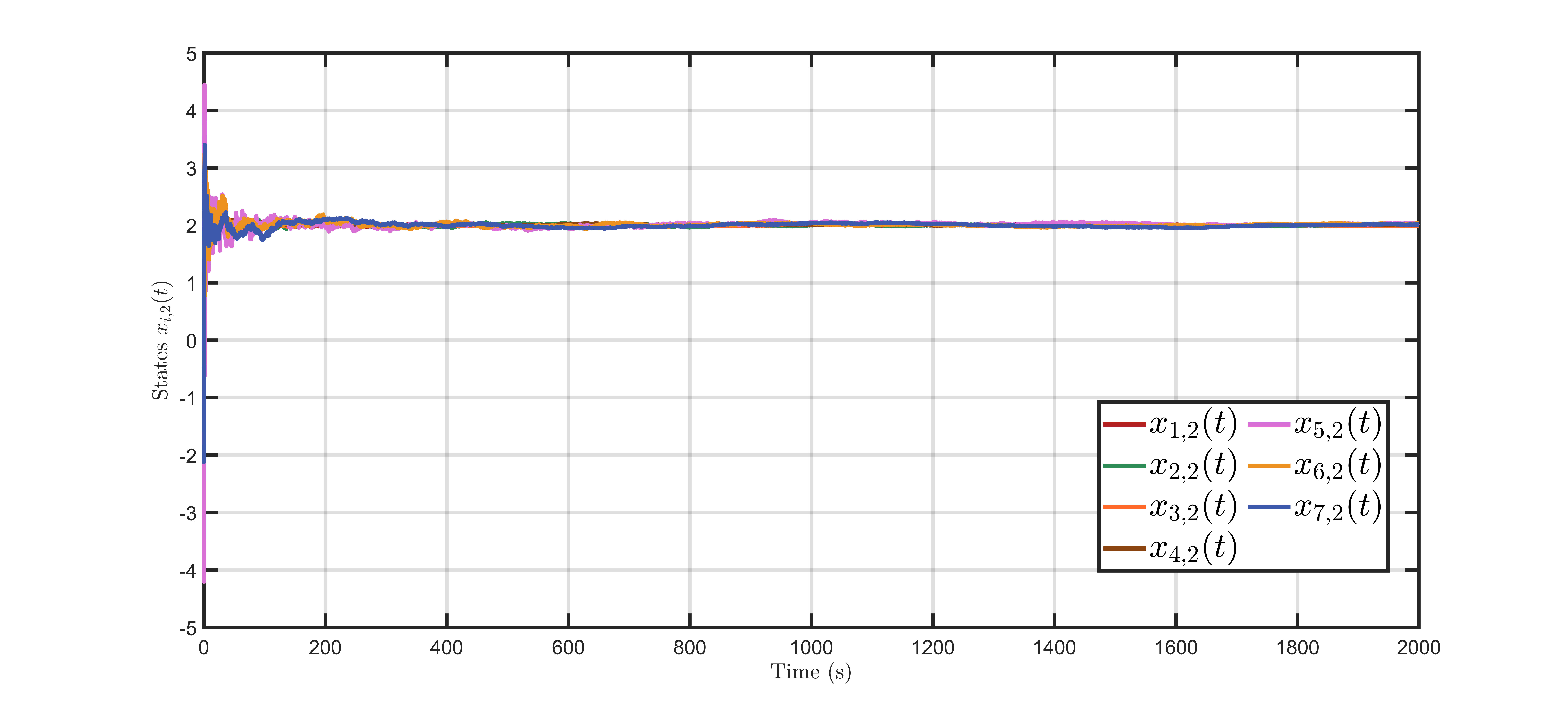}
	\end{minipage}
	\vfill
	\begin{minipage}[b]{0.98\linewidth}
		\centering
		\includegraphics[width=\linewidth]{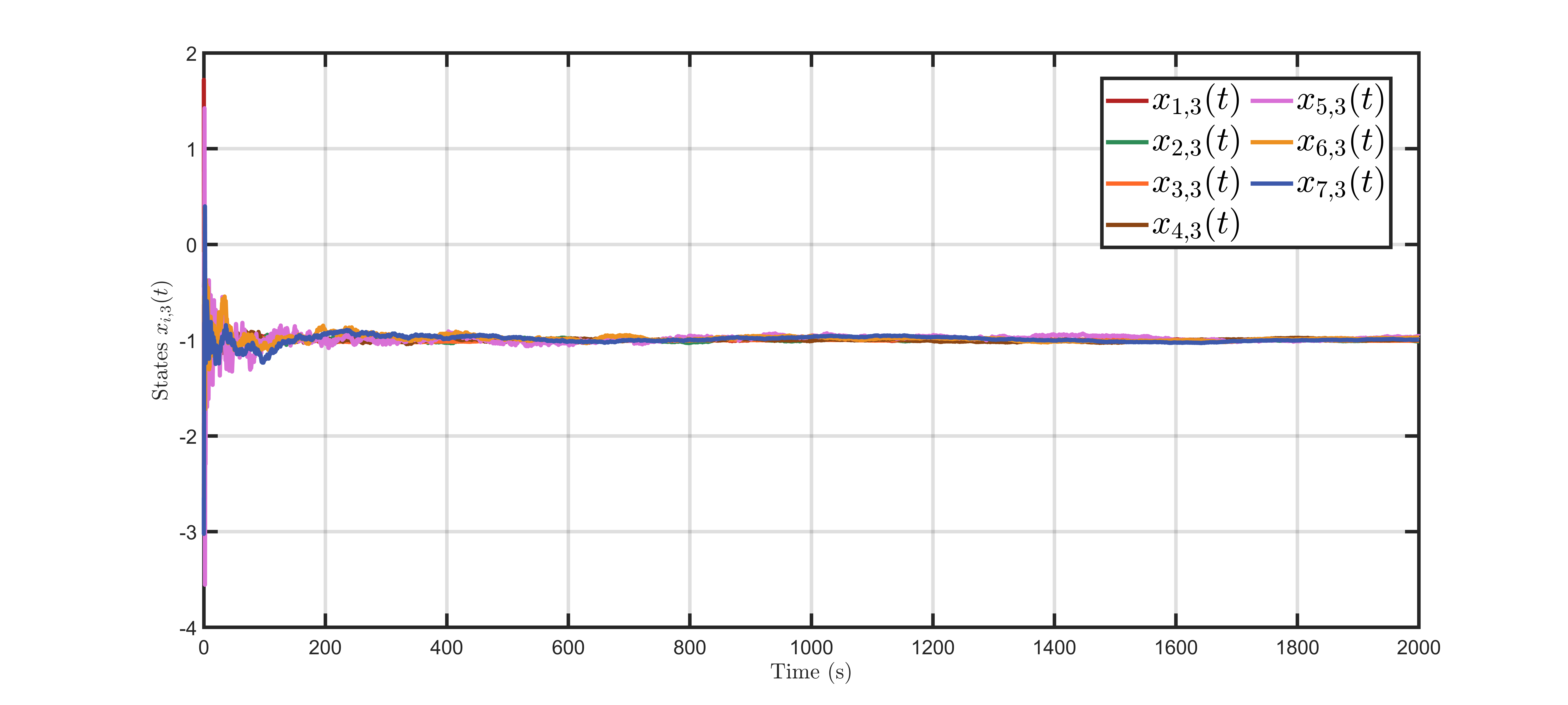}
	\end{minipage}
	\caption{States evolution of directed matrix-weighted network with time-varying topology $\widehat{\mathcal{G}}(t)$ switching among $\widehat{\mathcal{G}}_{1}$ in Figure \ref{G1} and $\widehat{\mathcal{G}}_{2}$, $\widehat{\mathcal{G}}_{3}$ Figure \ref{switching G2 and G3}.}
	\label{switch_AS_simulations}
\end{figure}

\begin{figure}[H]
	\centering
	\begin{minipage}[b]{0.98\linewidth}
		\centering
		\includegraphics[width=\linewidth]{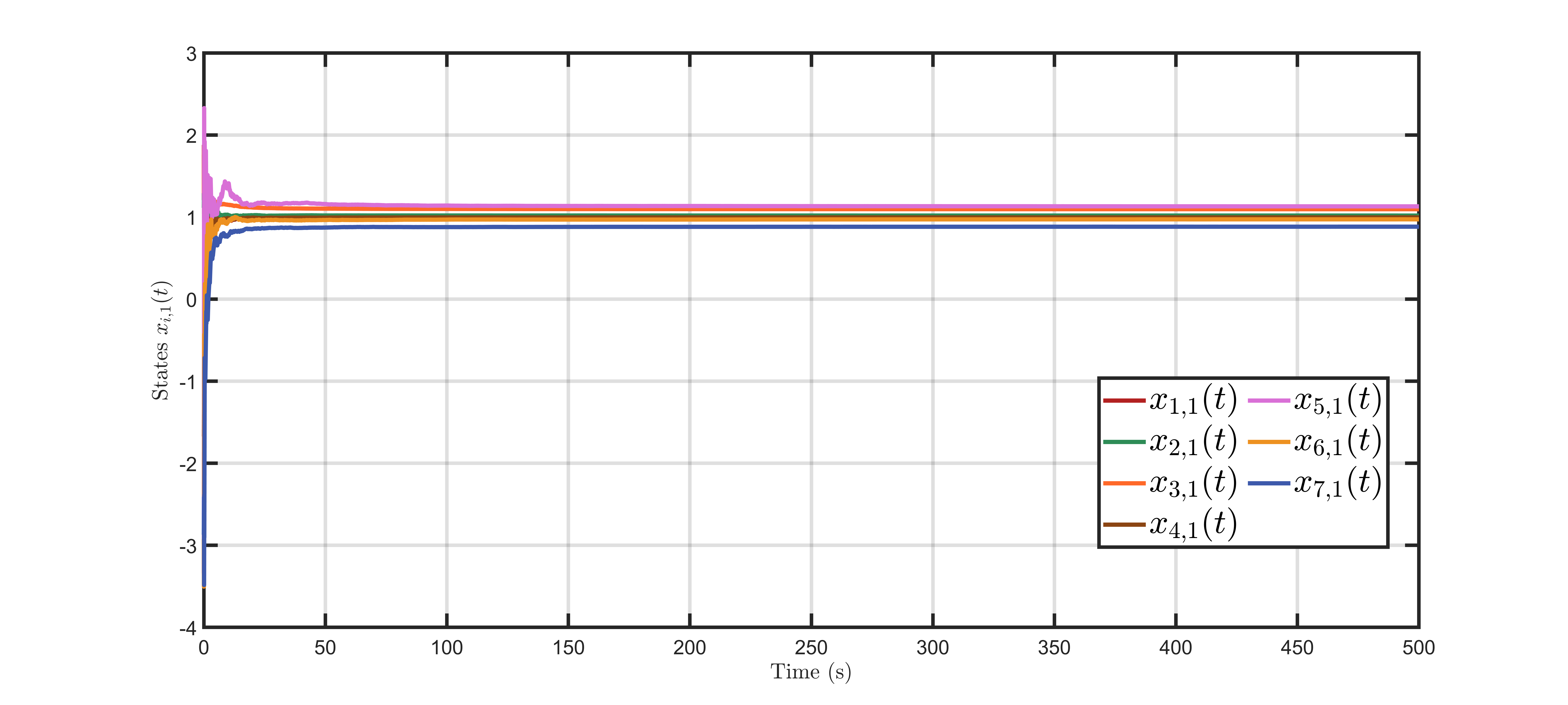}
	\end{minipage}
	\vfill
	\begin{minipage}[b]{0.98\linewidth}
		\centering
		\includegraphics[width=\linewidth]{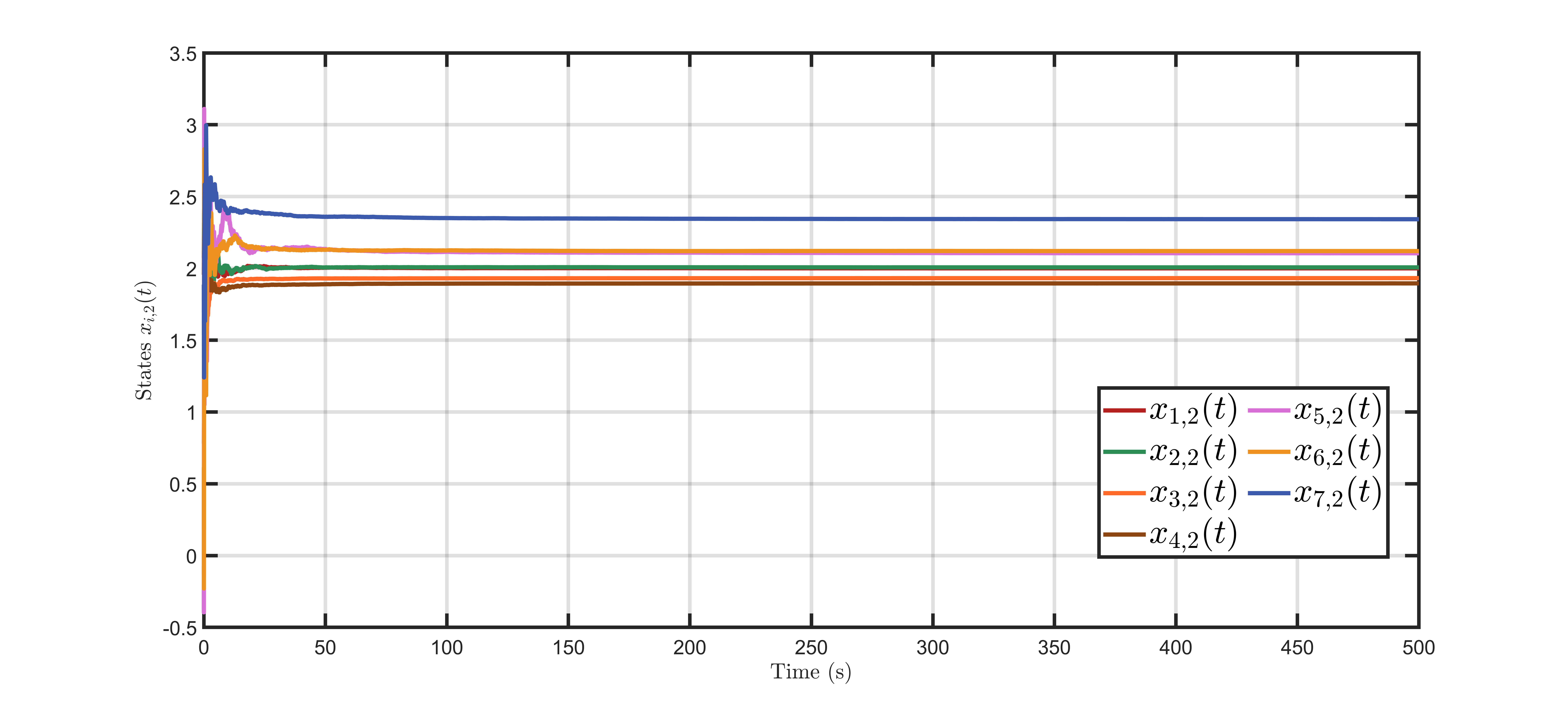}
	\end{minipage}
	\vfill
	\begin{minipage}[b]{0.98\linewidth}
		\centering
		\includegraphics[width=\linewidth]{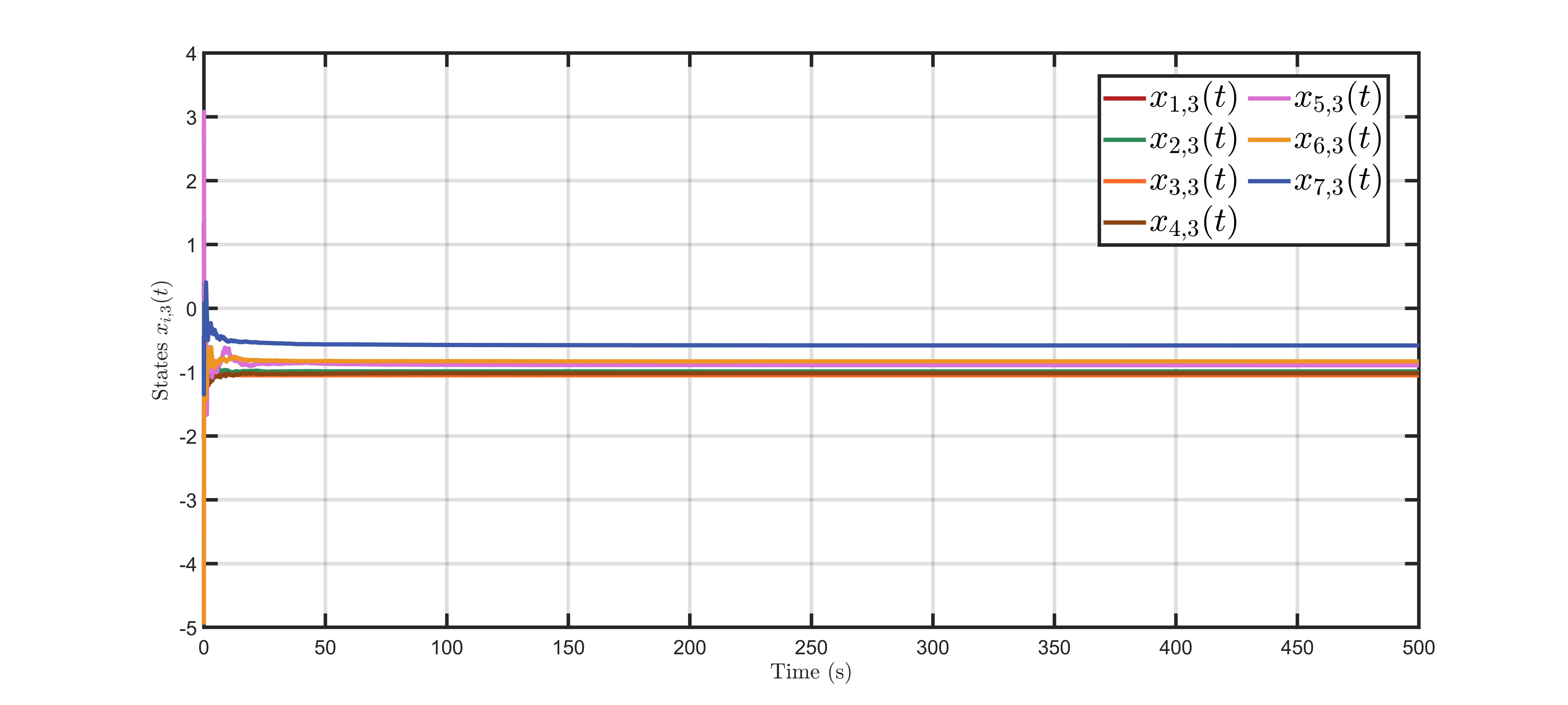}
	\end{minipage}
	\caption{States evolution of directed matrix-weighted network with time-varying topology $\widehat{\mathcal{G}}(t)$ switching among $\widehat{\mathcal{G}}_{1}$ in Figure \ref{G1} and $\widehat{\mathcal{G}}_{2}$, $\widehat{\mathcal{G}}_{3}$ Figure \ref{switching G2 and G3}. The control gain $c(t)$ is chosen as $(1+t)^{-2}$.}
	\label{ct=(1+t)^-2}
\end{figure}

\begin{figure}[H]
	\centering
	\begin{minipage}[b]{0.98\linewidth}
		\centering
		\includegraphics[width=\linewidth]{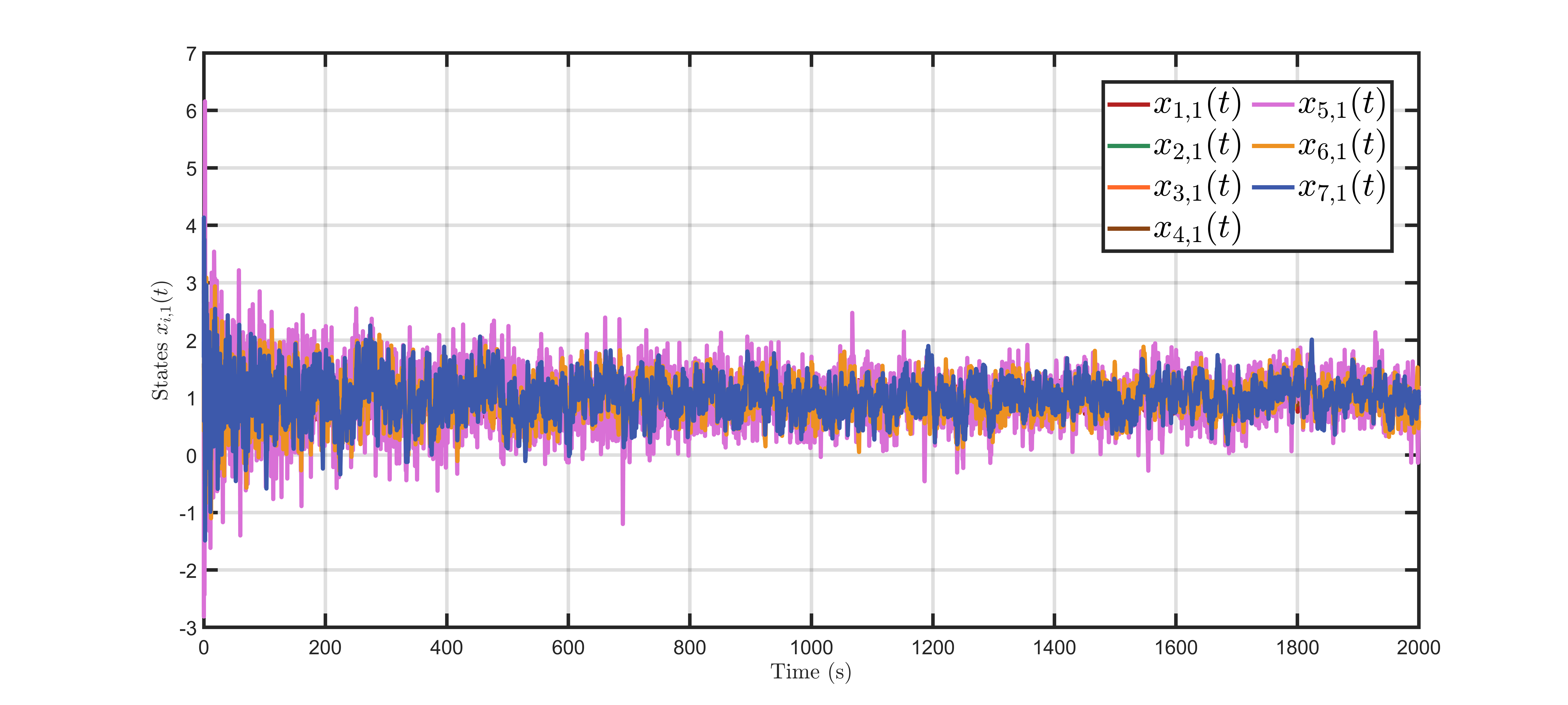}
	\end{minipage}
	\vfill
	\begin{minipage}[b]{0.98\linewidth}
		\centering
		\includegraphics[width=\linewidth]{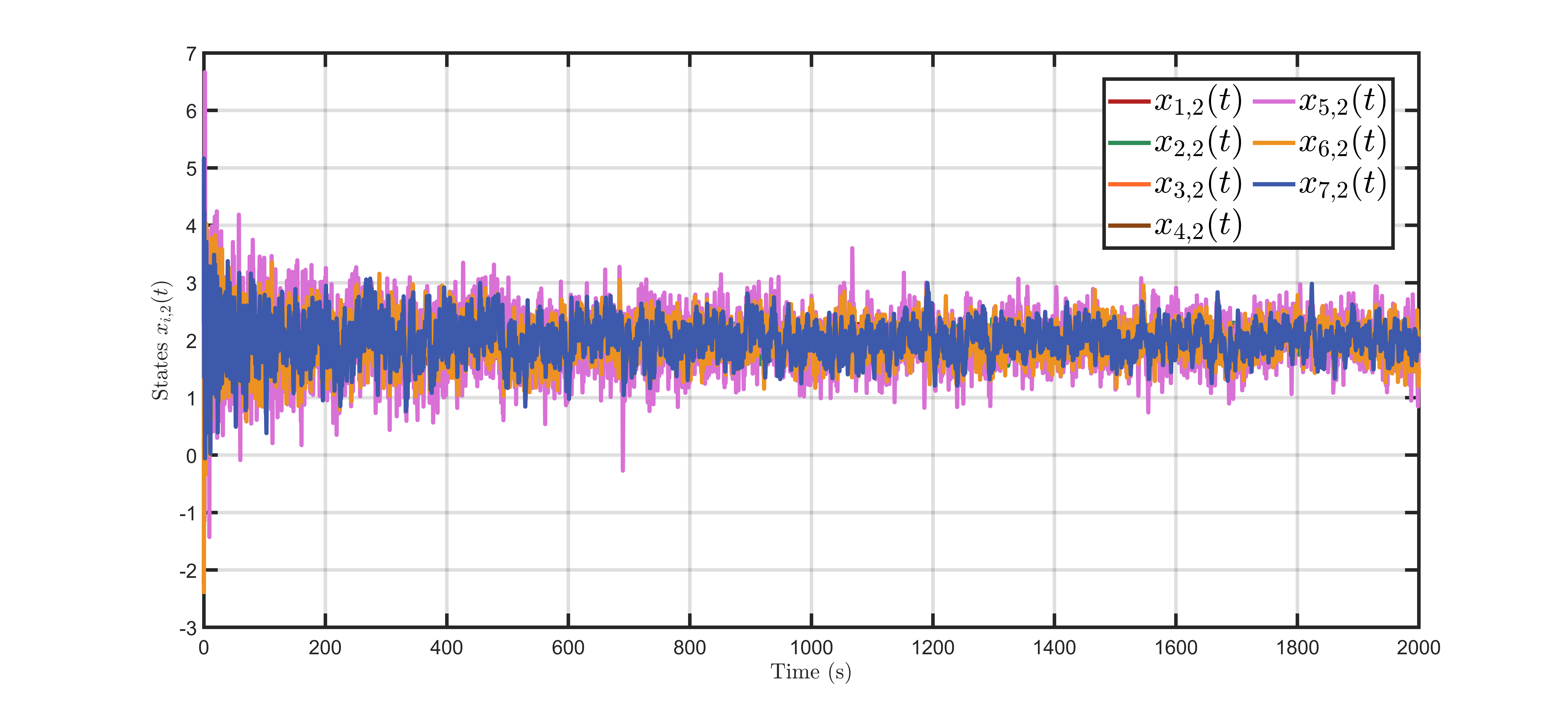}
	\end{minipage}
	\vfill
	\begin{minipage}[b]{0.98\linewidth}
		\centering
		\includegraphics[width=\linewidth]{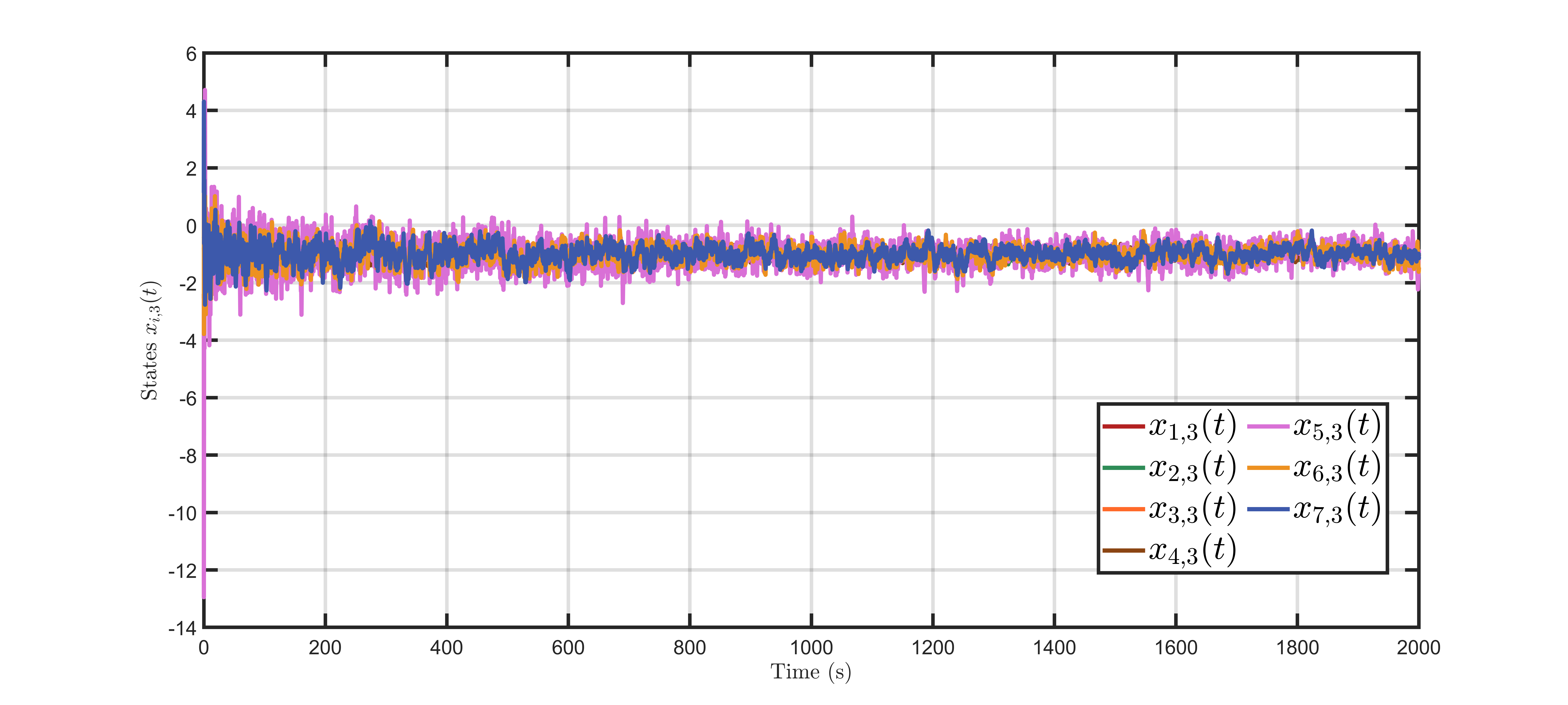}
	\end{minipage}
	\caption{States evolution of directed matrix-weighted network with time-varying topology $\widehat{\mathcal{G}}(t)$ switching among $\widehat{\mathcal{G}}_{1}$ in Figure \ref{G1} and $\widehat{\mathcal{G}}_{2}$, $\widehat{\mathcal{G}}_{3}$ Figure \ref{switching G2 and G3}. The control gain $c(t)$ is chosen as $(1+t)^{-\frac{1}{3}}$.}
	\label{ct=(1+t)^-0.33}
\end{figure}

\end{document}